\def\N{{\mathbb N}}
\def\R{{\mathbb R}}  
\def\C{{\mathbb C}}
\def\H{{\mathcal{H}}}
\def\L{{\mathcal{L}}}
\def\S{{\mathcal{S}}}
\def\a{{\alpha}}
\def\b{{\beta}}
\def\ga{{\gamma}}
\def\e{{\varepsilon}}
\def\m{{\mu}}
\def\r{{\rho}}
\def\s{{\sigma}}
\def\t{{\tau}}
\def\th{{\theta}}
\def\x{{\xi}}
\def\Th{{\Theta}}
\def\o{{\omega}}
\def\O{{\Omega}}
\def\Tr{{\rm Tr\,}}
\def\D{\mathcal{D}}
\def\S{\mathcal{S}}
\def\H{\mathcal{H}}
\def\F{\mathcal{F}}
\def\R{\mathbb{R}}
\def\C{\mathbb{C}}
\def\braket#1#2{\left\langle #1|#2\right\rangle }
\def\L{\mathcal{L}}
\def\N{\mathbb{N}}
\def\convd#1{\overset{#1}{\rightsquigarrow}}
\def\Im{{\rm Im\,}}
\def\convq#1{\underset{#1}{\rightsquigarrow}}
\def\CCR#1{{\rm CCR}\left(#1\right)}
\def\supp{{\rm supp\,}}
\def\rank{{\rm rank\,}}
\newcommand{\ket}[1]{\left | #1 \right \rangle}
\newcommand{\bra}[1]{\left \langle #1 \right |}
\newcommand{\bracket}[2]{\left \langle #1 \left | #2 \right\rangle\right.}
\newtheorem{theorem}    {Theorem}
\newtheorem{lemma}      [theorem]{Lemma}
\newtheorem{corollary}  [theorem]{Corollary}
\newtheorem{remark}	[theorem]{Remark}
\newtheorem{define}     [theorem]{Definition}
\title{Noncommutative Lebesgue decomposition with application to quantum local asymptotic normality}
\author{Akio Fujiwara%
\thanks{fujiwara@math.sci.osaka-u.ac.jp}\\
{Department of Mathematics, Osaka University}\\ 
{Toyonaka, Osaka 560-0043, Japan}\\ \\
and \\ \\
Koichi Yamagata%
\thanks{yamagata.28m@chuo-u.ac.jp}\\
{Department of Information and System Engineering, Chuo University} \\
{Bunkyo-ku, Tokyo 112-8551, Japan}
}%
\date{\today}
\begin{document}
\maketitle

\begin{abstract}
We develop a theory of local asymptotic normality in the quantum domain based on a noncommutative extension of the Lebesgue decomposition. 
This formulation gives a substantial generalization of the previous paper [Yamagata, Fujiwara, and Gill (2013).  \textit{Ann. Statist.},  \textbf{41}, 2197-2217.], extending the scope of the quantum local asymptotic normality to a wider class of quantum statistical models that comprise density operators of mixed ranks.\end{abstract}

\section{Introduction}

In \cite{YFG}, we formulated a theory of quantum local asymptotic normality (q-LAN) for quantum statistical models that comprise mutually absolutely continuous density operators on a finite dimensional Hilbert space 
$\H$.  
Here, density operators $\r$ and $\s$ are said to be {\em mutually absolutely continuous}, $\r\sim\s$ in symbols, 
if there exists a Hermitian operator $\L$ that satisfies
\[
  \s=e^{\frac{1}{2}\L}\r e^{\frac{1}{2}\L}.
\]
The operator $\L$ satisfying this relation is called (a version of) the {\em quantum log-likelihood ratio}  \cite{YFG}. 
We might as well call the operator $\L$ the {\em symmetric log-likelihood ratio} by analogy with the term symmetric logarithmic derivative \cite{{Helstrom:1976}, {Holevo:1982}}.
When the reference states $\r$ and $\s$ need to be specified, $\L$ is denoted by $\L(\s |\r)$, so that 
\[
  \s=e^{\frac{1}{2}\L(\s |\r)} \r e^{\frac{1}{2}\L(\s |\r)}.
\]
We use the convention that $\L(\r |\r)=0$.

For example, when both $\r$ and $\s$ are strictly positive, the quantum log-likelihood ratio is uniquely given by
\[
 \L(\s |\r)=2\log\left(\s \# \r^{-1} \right).
\]
Here, the operator geometric mean $A\# B$ \cite{{Bhatia},{KuboAndo}} for strictly positive operators $A$ and $B$ is defined as the positive operator $X$ satisfying the equation $B=XA^{-1}X$, 
and is explicitly given by
$\displaystyle A\# B=\sqrt{A}\sqrt{\sqrt{A^{-1}} B \sqrt{A^{-1}}\,} \sqrt{A}$.

The theory of q-LAN developed in \cite{YFG} was based essentially on the analysis of the quantum log-likelihood ratio; 
thus the assumption of mutual absolute continuity for quantum statistical models to be investigated appears indispensable.  
Nevertheless, the original definition of classical LAN did not require mutual absolute continuity for the model \cite{Vaart}: 
a sequence $\left\{P_{\theta}^{(n)} \left|\;\theta\in\Theta\subset\R^{d}\right.\right\}$ 
of $d$-dimensional statistical models, 
each comprising probability measures on a measurable space $(\Omega^{(n)}, \F^{(n)})$, 
is said to be {\em locally asymptotically normal} at $\theta_{0}\in\Theta$ 
if there exist a sequence $\Delta^{(n)}=(\Delta_1^{(n)},\,\dots,\,\Delta_d^{(n)})$ of $d$-dimensional random vectors and a $d\times d$ positive definite matrix $J$ such that $\Delta^{(n)}\convd {0} N(0,J)$ and
\[
\log\frac{dP_{\theta_{0}+h/\sqrt{n}}^{(n)}}{dP_{\theta_{0}}^{(n)}}
=h^{i}\Delta_{i}^{(n)}-\frac{1}{2}h^{i}h^{j}J_{ij}+o_{P_{\theta_0}^{(n)}}(1),
\qquad 
(h\in\R^{d}).
\]
Here the arrow $\convd {h}$ stands for the convergence in distribution under $P_{\theta_{0}+h/\sqrt{n}}^{(n)}$, the remainder term $o_{P_{\theta_0}^{(n)}}(1)$ converges in probability to zero under $P_{\theta_0}^{(n)}$, and Einstein's summation convention is used.

The key idea behind this classical formulation is the use of the Radon-Nikodym density, 
or more fundamentally, the use of the Lebesgue decomposition of $P_{\theta_{0}+h/\sqrt{n}}^{(n)}$ with respect to $P_{\theta_{0}}^{(n)}$.
In order to extend such a flexible formulation to the quantum domain, we must invoke 
 a proper quantum analogue of the Lebesgue decomposition.
However, no such analogue that is applicable to the theory of q-LAN
is known to date. 

The objective of the present paper is twofold: 
we first devise a theory of the Lebesgue decomposition in the quantum domain 
that is consistent with the framework of \cite{YFG}, 
and then generalize the theory of q-LAN in order to get rid of the assumption of mutual absolute continuity for the model.

The paper is organized as follows. 
In Section \ref{sec:acANDsing}, we extend the absolute continuity and singularity to the quantum domain 
in such a way that they are fully consistent with the notion of quantum mutual absolute continuity introduced in \cite{YFG}. 
By exploiting these notions, we formulate a noncommutative analogue of the Lebesgue decomposition in Section \ref{sec:Lebesgue}. 
In Section \ref{sec:qLAN},  we develop a theory of q-LAN that enables us to treat quantum statistical models comprising density operators of mixed ranks. 
In Section \ref{sec:example}, we give a simple illustrative example to demonstrate the flexibility of our framework. 
Section \ref{sec:conclusion} is devoted to concluding remarks. 
Throughout the paper, we assume some familiarity with terms and notations introduced in \cite{YFG}, and therefore, we give a brief overview of them in Appendix for the reader's convenience.

\section{Absolute continuity and singularity}\label{sec:acANDsing}

Given positive operators $\r$ and $\s$ on a finite dimensional Hilbert space $\H$ with $\r\neq 0$, 
let $\sigma\!\!\downharpoonleft_{\supp\rho}$ denote the {\em excision}
of $\s$ relative to $\r$ by the operator on the subspace $\supp\rho:=(\ker\rho)^\perp$ of $\H$ defined by
\[ \sigma\!\!\downharpoonleft_{\supp\rho}:=\iota_\rho^*\, \sigma\, \iota_\rho, \]
where $\iota_\rho: \supp\rho\hookrightarrow \H$ is the inclusion map. 
More specifically, let
\begin{equation}\label{eqn:blockRS}
\rho=\begin{pmatrix} \rho_0 & 0\\ 0 & 0 \end{pmatrix},
\qquad
\sigma=\begin{pmatrix} \sigma_0 & \alpha\\ \alpha^* & \beta
\end{pmatrix}
\end{equation}
be a simultaneous block matrix representations of $\r$ and $\s$, where $\r_0>0$. 
Then the excision $\sigma\!\!\downharpoonleft_{\supp\rho}$ is nothing but the operator represented by the $(1,1)$-block $\s_0$ of $\s$. 
The notion of the excision was usefully exploited in \cite{YFG}. 
In particular, it was shown that $\r$ and $\s$ are mutually absolutely continuous if and only if 
\[
 \sigma\!\!\downharpoonleft_{\supp\rho}>0 \quad\mbox{and}\quad \rank\rho=\rank\sigma,
\]
or equivalently, if and only if
\begin{equation}\label{eqn:mutuallyAC}
 \sigma\!\!\downharpoonleft_{\supp\rho}>0 \quad\mbox{and}\quad \rho\!\!\downharpoonleft_{\supp\sigma}>0.
\end{equation}

Now we introduce noncommutative analogues of the absolute continuity and singularity that played essential roles in the classical measure theory. 
Given positive operators $\r$ and $\s$, we say $\r$ is {\em singular} with respect to $\s$, denoted by $\r\perp\s$, if
\[
 \sigma\!\!\downharpoonleft_{\supp\rho}=0.
\]
The following lemma implies that the relation $\perp$ is symmetric; 
this fact allows us to say that $\r$ and $\s$ are mutually singular, as in the classical case. 

\begin{lemma}\label{lem:1}
For nonzero positive operators $\r$ and $\s$, the following are equivalent. 
\begin{itemize}
\item[{\rm (a)}]  $\r\perp\s$. 
\item[{\rm (b)}]  $\supp\rho \perp \supp\sigma$.
\item[{\rm (c)}]  $\Tr \r \s=0$.
\end{itemize}
\end{lemma}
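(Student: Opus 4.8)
The plan is to prove the cycle of implications (a) $\Rightarrow$ (b) $\Rightarrow$ (c) $\Rightarrow$ (a), exploiting the block-matrix representation \eqref{eqn:blockRS} throughout. Write $\r$ and $\s$ in the simultaneous block form adapted to the decomposition $\H=\supp\rho\oplus\ker\rho$, so that $\r=\mathrm{diag}(\r_0,0)$ with $\r_0>0$ on $\supp\rho$, and $\s=\begin{pmatrix}\s_0 & \a\\ \a^* & \b\end{pmatrix}$, where by definition $\sigma\!\!\downharpoonleft_{\supp\rho}=\s_0$.

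First I would show (a) $\Rightarrow$ (b). Assume $\s_0=0$. Since $\s\ge 0$, positivity of the $2\times2$ operator block matrix forces the off-diagonal block $\a$ to vanish as well: for any vectors $u\in\supp\rho$ and $v\in\ker\rho$, the $2\times2$ scalar compression $\begin{pmatrix}\langle u,\s_0 u\rangle & \langle u,\a v\rangle\\ \langle v,\a^* u\rangle & \langle v,\b v\rangle\end{pmatrix}\ge0$ has a zero $(1,1)$ entry, hence zero $(1,2)$ entry, so $\langle u,\a v\rangle=0$ for all such $u,v$, i.e. $\a=0$. Therefore $\s=\mathrm{diag}(0,\b)$, which means $\supp\sigma\subseteq\ker\rho=(\supp\rho)^\perp$; equivalently $\supp\rho\perp\supp\sigma$, which is (b).

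Next, (b) $\Rightarrow$ (c) is immediate: if $\supp\rho\perp\supp\sigma$ then $\r\s=0$ as operators (the range of $\s$ lies in $\supp\sigma\subseteq\ker\rho$), so a fortiori $\Tr\r\s=0$. For (c) $\Rightarrow$ (a), note that $\r,\s\ge0$ gives $\Tr\r\s=\Tr(\sqrt{\r}\,\s\,\sqrt{\r})\ge0$ with equality iff $\sqrt{\r}\,\s\,\sqrt{\r}=0$, since $\sqrt{\r}\,\s\,\sqrt{\r}$ is positive semidefinite; and $\sqrt{\r}\,\s\,\sqrt{\r}=0$ forces $\sqrt{\s}\,\sqrt{\r}=0$ (again because $(\sqrt{\s}\sqrt{\r})^*(\sqrt{\s}\sqrt{\r})=\sqrt{\r}\,\s\,\sqrt{\r}$), hence $\s\,\supp\rho\perp\supp\rho$ read off the block form yields $\s_0=\iota_\rho^*\s\iota_\rho=0$, i.e. $\sigma\!\!\downharpoonleft_{\supp\rho}=0$, which is (a).

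The argument is essentially elementary once the block representation is in place; the only point requiring a little care is the step (a) $\Rightarrow$ (b), where one must extract vanishing of the off-diagonal block $\a$ from positivity together with the vanishing $(1,1)$-block — this is the standard fact that a positive block operator with a zero diagonal block has the corresponding off-diagonal blocks zero, and I would state it as a one-line compression argument as above rather than belabor it. Symmetry of $\perp$ is then a corollary, since conditions (b) and (c) are manifestly symmetric in $\r$ and $\s$.
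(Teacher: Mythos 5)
Your proof is correct and follows essentially the same route as the paper: the cycle (a) $\Rightarrow$ (b) $\Rightarrow$ (c) $\Rightarrow$ (a) in the simultaneous block representation, with positivity forcing the off-diagonal block $\a$ to vanish in the first step. The only cosmetic difference is in (c) $\Rightarrow$ (a), where the paper reads $\Tr \r\s=\Tr \r_0\s_0$ off the block form and concludes $\s_0=0$ from $\r_0>0$ and $\s_0\ge 0$, whereas you pass through $\sqrt{\r}\,\s\,\sqrt{\r}=0$; both are equally elementary.
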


\begin{proof}
Let us represent $\r$ and $\s$ in the form \eqref{eqn:blockRS}. 
Then, (a) is equivalent to $\s_0=0$. 
In this case, the positivity of $\s$ entails that the off-diagonal blocks $\a$ and $\a^*$ of $\s$ must vanish, and $\s$ takes the form
\[ 
\sigma=\begin{pmatrix}
0 & 0\\
0 & \beta
\end{pmatrix}.
\]
This implies (b). 
Next, (b) $\Rightarrow$ (c) is obvious.  
Finally, assume (c). 
With the representation \eqref{eqn:blockRS}, this is equivalent to $\Tr \r_0 \s_0=0$.
Since $\r_0>0$, we have $\s_0=0$, proving (a).
\end{proof}

We next introduce the notion of absolute continuity. 
Given positive operators $\r$ and $\s$, we say $\r$ is {\em absolutely continuous} with respect to $\s$, denoted by $\r\ll\s$,  if
\[
 \sigma\!\!\downharpoonleft_{\supp\rho}>0.
\]

Some remarks are in order. 
First, the above definition of absolute continuity is consistent with the definition of mutual absolute continuity:  
in fact, as demonstrated in (\ref{eqn:mutuallyAC}), $\r$ and $\s$ are mutually absolutely continuous if and only if both $\r\ll\s$ and $\s\ll\r$ hold. 
Second, $\r\ll\s$ is a much weaker condition than $\supp\r\subset\supp\s$; 
this makes a striking contrast to the classical measure theory. 
For example, pure states $\r=\ket{\psi}\bra{\psi}$ and $\s=\ket{\xi}\bra{\xi}$ are mutually absolutely continuous if and only if $\braket{\xi}{\psi}\neq 0$, (see \cite[Example 2.3]{YFG}).

The next lemma plays an essential role in the present paper. 

\begin{lemma}\label{lem:2}
For nonzero positive operators $\r$ and $\s$, the following are equivalent. 
\begin{itemize}
\item[{\rm (a)}]  $\r\ll\s$. 
\item[{\rm (b)}]  $\exists R> 0$ such that $\s\ge R\r R$.
\item[{\rm (c)}]  $\exists R> 0$ such that $\r\le R\s R$.
\item[{\rm (d)}]  $\exists R\ge 0$ such that $\r=R\s R$.
\end{itemize}
\end{lemma}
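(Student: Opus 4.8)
The plan is to organise the four conditions around the kernel description of absolute continuity: in the block representation \eqref{eqn:blockRS} one has $\s\!\!\downharpoonleft_{\supp\r}=\s_0$, and for $v\in\supp\r$ one has $\langle v,\s v\rangle=\|\s^{1/2}v\|^2$, so that $\r\ll\s$ is equivalent to $\supp\r\cap\ker\s=\{0\}$. Two of the links are then essentially free: (b) $\Leftrightarrow$ (c) because, $R>0$ being invertible, conjugating $\s\ge R\r R$ by $R^{-1}$ gives $R^{-1}\s R^{-1}\ge\r$, and conversely. It thus remains to prove (a) $\Rightarrow$ (d), (d) $\Rightarrow$ (a), (a) $\Rightarrow$ (b), and (c) $\Rightarrow$ (a).

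For (a) $\Rightarrow$ (d) I would seek $R$ of the block-diagonal form $R=\bigl(\begin{smallmatrix}R_0 & 0\\ 0 & 0\end{smallmatrix}\bigr)$; then $R\s R=\bigl(\begin{smallmatrix}R_0\s_0R_0 & 0\\ 0 & 0\end{smallmatrix}\bigr)$, so $R\s R=\r$ reduces to the single equation $R_0\s_0R_0=\r_0$ on $\supp\r$, which—$\s_0>0$ and $\r_0>0$ by hypothesis—is solved by the operator geometric mean $R_0=\r_0\#\s_0^{-1}=\s_0^{-1/2}(\s_0^{1/2}\r_0\s_0^{1/2})^{1/2}\s_0^{-1/2}\ge0$. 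For (d) $\Rightarrow$ (a), write $\r=R\s R=(R\s^{1/2})(R\s^{1/2})^*$, whence $\supp\r=\mathrm{ran}(R\s^{1/2})$; if some nonzero $v\in\supp\r$ had $\s v=0$, then $v=R\s^{1/2}w$ for some $w$ and $\s^{1/2}R\s^{1/2}w=\s^{1/2}v=0$, so $\|R^{1/2}\s^{1/2}w\|^2=\langle w,\s^{1/2}R\s^{1/2}w\rangle=0$, giving $R^{1/2}\s^{1/2}w=0$ and hence $v=R^{1/2}(R^{1/2}\s^{1/2}w)=0$, a contradiction; therefore $\supp\r\cap\ker\s=\{0\}$.

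The core of the argument is (a) $\Rightarrow$ (b), where a block-diagonal $R$ is no longer enough (the diagonal choice fails once the Schur complement $\b-\a^*\s_0^{-1}\a$ of $\s$ has a nontrivial kernel). Here I would pass through mutual absolute continuity by introducing the positive operator $\s':=\bigl(\begin{smallmatrix}\s_0 & \a\\ \a^* & \a^*\s_0^{-1}\a\end{smallmatrix}\bigr)$, whose Schur complement vanishes. The factorisation $\s'=XX^*$ with the injective map $X=\bigl(\begin{smallmatrix}\s_0^{1/2}\\ \a^*\s_0^{-1/2}\end{smallmatrix}\bigr)$ shows $\s'\ge0$ and $\rank\s'=\rank\r$; moreover $\s-\s'=\bigl(\begin{smallmatrix}0 & 0\\ 0 & \b-\a^*\s_0^{-1}\a\end{smallmatrix}\bigr)\ge0$ by the Schur-complement positivity of $\s$, and $\s'\!\!\downharpoonleft_{\supp\r}=\s_0>0$. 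Hence $\r\sim\s'$ by the characterisation of mutual absolute continuity recalled above (equivalently \eqref{eqn:mutuallyAC} holds for $\r,\s'$), so there is a Hermitian $\L$ with $\s'=e^{\L/2}\r e^{\L/2}$, and then $\s\ge\s'=e^{\L/2}\r e^{\L/2}$ establishes (b) with $R=e^{\L/2}>0$. For (c) $\Rightarrow$ (a): if $\r\le R\s R$ with $R>0$ while $\s v=0$ for some nonzero $v\in\supp\r$, then with $u:=R^{-1}v$ one gets $\|\r^{1/2}u\|^2=\langle u,\r u\rangle\le\langle u,R\s Ru\rangle=\langle v,\s v\rangle=0$, so $u=R^{-1}v\in\ker\r=(\supp\r)^\perp$, and therefore $\langle v,R^{-1}v\rangle=\langle v,u\rangle=0$, contradicting $R^{-1}>0$.

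The step I expect to be the main obstacle is precisely (a) $\Rightarrow$ (b)/(c): the naive diagonal ansatz breaks down when $\s$ is maximally degenerate in the directions transverse to $\supp\r$, so one genuinely needs the off-diagonal content of $e^{\L/2}$, and forming $\s'$ and invoking the characterisation of mutual absolute continuity from \cite{YFG} is the natural way to produce it. Two smaller points need care: the square-root and support/kernel bookkeeping in the two arguments above that derive $\supp\r\cap\ker\s=\{0\}$ (notably the identity $\mathrm{ran}(AA^*)=\mathrm{ran}(A)$ in finite dimension), and the fact that the passage $\r\sim\s'$ invokes a result of \cite{YFG} stated there for density operators—this is harmless here, since conditions (a)--(d) are unchanged under independent positive rescalings of $\r$ and $\s$, so one may normalise at the outset.
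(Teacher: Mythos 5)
Your proof is correct, and its core coincides with the paper's: for (a) $\Rightarrow$ (d) you use the same geometric-mean solution $R_0=\r_0\#\s_0^{-1}$ of $R_0\s_0R_0=\r_0$, and your $\s'$ in (a) $\Rightarrow$ (b) is exactly the operator $\bigl(\begin{smallmatrix}\s_0&\a\\ \a^*&\a^*\s_0^{-1}\a\end{smallmatrix}\bigr)$ that the paper packs into $\s$ via the congruence $\s=E^*\,\mathrm{diag}(\s_0,\,\b-\a^*\s_0^{-1}\a)\,E$. Where you diverge is in execution. The paper exhibits the strictly positive $R$ explicitly as $E^*\,\mathrm{diag}(\s_0\#\r_0^{-1},\ga)\,E$ — a formula it reuses for the Lebesgue decomposition in Section 3 — whereas you obtain $R=e^{\L/2}$ by invoking the rank-plus-excision characterisation of mutual absolute continuity from \cite{YFG} applied to the pair $(\r,\s')$; this is legitimate (the paper quotes that characterisation, and your normalisation remark disposes of the density-operator caveat), but it is less self-contained and less constructive. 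In the reverse directions you replace the paper's block computations (it proves (b) $\Rightarrow$ (a) by showing $\s\!\!\downharpoonleft_{\supp\r}=R_0\r_0R_0+\t_0>0$, and (d) $\Rightarrow$ (a) by decomposing along $\supp R$) with shorter kernel arguments based on the equivalence $\r\ll\s\Leftrightarrow\supp\r\cap\ker\s=\{0\}$, which is a clean and arguably more transparent way to close the cycle. One inaccuracy in a motivational aside: the block-diagonal ansatz for (a) $\Rightarrow$ (b) already fails whenever $\a\neq0$, since positivity of $\s-\mathrm{diag}(\s_0,0)=\bigl(\begin{smallmatrix}0&\a\\ \a^*&\b\end{smallmatrix}\bigr)$ forces $\a=0$, not merely when the Schur complement $\b-\a^*\s_0^{-1}\a$ is degenerate; this does not affect the validity of the proof.
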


\begin{proof}
We first prove (a) $\Rightarrow$ (b). 
Let
\[ 
\rho=\begin{pmatrix} \rho_0 & 0\\ 0 & 0 \end{pmatrix},
\qquad
\sigma=\begin{pmatrix} \sigma_0 & \alpha\\ \alpha^* & \beta
\end{pmatrix}
\]
where $\r_0>0$. 
Since $\s_0=\s\!\!\downharpoonleft_{\supp\r}>0$, 
the matrix $\s$ is further decomposed as
\[
\sigma=
E^*
\begin{pmatrix}
\sigma_0 & 0\\
0 & \beta-\alpha^* \sigma_0^{-1} \alpha
\end{pmatrix}
E,
\qquad
E:=
\begin{pmatrix}
I & \sigma_0^{-1} \alpha\\
0& I
\end{pmatrix}.
\]
Note that, since $\s\ge 0$ and $E$ is full-rank, we have
\begin{equation}\label{eqn:sigma22}
\beta-\alpha^* \sigma_0^{-1} \alpha\ge 0.
\end{equation} 
Now we set
\[
R:=E^* \begin{pmatrix} X & 0\\ 0 & \ga \end{pmatrix} E,
\]
where $X:=\s_0 \# \r_0^{-1}$, 
and $\ga$ is an arbitrary strictly positive operator. 
Then 
\begin{eqnarray*}
R\r R
&=& 
E^* \begin{pmatrix} X & 0\\ 0 & \ga \end{pmatrix} E
\begin{pmatrix} \rho_0 & 0\\ 0 & 0 \end{pmatrix}
E^* \begin{pmatrix} X & 0\\ 0 & \ga \end{pmatrix} E \\
&=& 
E^* \begin{pmatrix} X & 0\\ 0 & \ga \end{pmatrix}
\begin{pmatrix} \rho_0 & 0\\ 0 & 0 \end{pmatrix}
\begin{pmatrix} X & 0\\ 0 & \ga \end{pmatrix} E \\
&=& 
E^* \begin{pmatrix} X \rho_0 X & 0\\ 0 & 0 \end{pmatrix} E \\
&=& 
E^* \begin{pmatrix} \s_0 & 0\\ 0 & 0 \end{pmatrix} E \\
&\le& 
E^* \begin{pmatrix} \s_0 & 0\\ 0 & \beta-\alpha^* \sigma_0^{-1} \alpha \end{pmatrix} E
=
\s.
\end{eqnarray*}
Here, the inequality is due to (\ref{eqn:sigma22}). 
Since $R>0$, we have (b).

We next prove (b) $\Rightarrow$ (a). 
Due to assumption, there is a positive operator $\t\ge 0$ such that 
\[ \s=R\r R+\t. \]
Let
\[ 
\rho=\begin{pmatrix} \rho_0 & 0\\ 0 & 0 \end{pmatrix},\qquad
R=\begin{pmatrix} R_0 & R_1\\ R_1^* & R_2 \end{pmatrix},\qquad
\t=\begin{pmatrix} \t_0 & \t_1\\ \t_1^* & \t_2 \end{pmatrix},
\]
where $\r_0>0$. 
Then
\[
 \s=\begin{pmatrix} R_0\r_0 R_0+\t_0 & R_0\r_0 R_1 +\t_1
 	\\ R_1^*\r_0 R_0+\t_1^* & R_1^*\r_0 R_1+\t_2 \end{pmatrix}
\]
and
\[
 \s\!\!\downharpoonleft_{\supp\r}=R_0\r_0 R_0+\t_0. 
\]
Since $R_0>0$ and $\t_0\ge 0$, we have $\s\!\!\downharpoonleft_{\supp\r}>0$.

Now that the equivalence (b) $\Leftrightarrow$ (c) is obvious, 
we proceed to the proof of (a) $\Rightarrow$ (d). Let
\[ 
\r=\begin{pmatrix} \r_0 & 0\\ 0 & 0 \end{pmatrix},
\qquad
\s=\begin{pmatrix} \s_0 & \a \\ \a^* & \b
\end{pmatrix}, 
\]
where $\r_0>0$. Since $\s_0=\s\!\!\downharpoonleft_{\supp\r}>0$, 
\[
R:=\begin{pmatrix} \r_0 \# \s_0^{-1} & 0\\ 0 & 0 \end{pmatrix}
\]
is a well-defined positive operator satisfying
\[
 \r=R \s R. 
\]
This proves (d). 

Finally, we prove (d) $\Rightarrow$ (a). 
Let the positive operator $R$ in $\r=R \s R$ be represented as
\[ 
R=\begin{pmatrix} R_0 & 0\\ 0 & 0 \end{pmatrix},
\]
where $R_0>0$, and accordingly, let us represent $\r$ and $\s$ as
\[ 
\r=\begin{pmatrix} \r_0 & \r_1\\ \r_1^* & \r_2 \end{pmatrix},
\qquad
\s=\begin{pmatrix} \s_0 & \s_1 \\ \s_1^* & \s_2
\end{pmatrix}.
\]
The relation $\r=R \s R$ is then reduced to 
\[ 
\begin{pmatrix} \r_0 & \r_1\\ \r_1^* & \r_2 \end{pmatrix}
=
\begin{pmatrix} R_0 \s_0 R_0 & 0 \\ 0 & 0
\end{pmatrix}.
\]
This implies that $\supp\r=\supp\r_0$ and $\r_0\sim\s_0$. 
Consequently, 
\[
 \s\!\!\downharpoonleft_{\supp\r}
 =\s\!\!\downharpoonleft_{\supp\r_0}
 =\s_0\!\!\downharpoonleft_{\supp\r_0}
 \,>0.
\]
In the last inequality, we used the fact that $\r_0\sim\s_0$ implies $\r_0\ll\s_0$. 
\end{proof}

\section{Lebesgue decomposition}\label{sec:Lebesgue}

In this section, we extend the Lebegue decomposition to the quantum domain.

\subsection{Case 1:  when $\s \gg \r$}

To elucidate our motivation, 
let us first treat the case when $\s\gg \r$. 
In Lemma \ref{lem:2}, we found the following characterization: 
\[
 \s\gg\r \;\Longleftrightarrow\; \exists R> 0
 \mbox{ such that $\s\ge R\r R$}.
\]
Note that such an operator $R$ is not unique. 
For example, suppose that $\s\ge R_1\r R_1$ holds for some $R_1>0$.
Then for any $t\in (0,1]$, the operator $R_t:=t R_1$ is strictly positive and satisfies $\s\ge R_t \r R_t$. 
It is then natural to seek, if any, the ``maximal'' operator of the form $R\r R$ that is packed into $\s$.  
Put differently, letting $\t:=\s-R\r R$, we want to find the ``minimal'' positive operator $\t$ that satisfies 
\begin{equation}\label{eqn:decomp0}
 \s=R\r R+\t,
\end{equation}
where $R>0$.
This question naturally leads us to a noncommutative analogue of the Lebesgue decomposition, 
in that 
a positive operator $\t$ satisfying (\ref{eqn:decomp0}) is regarded as minimal if $\t \perp \r$. 

In the proof of Lemma \ref{lem:2}, we found the following decomposition: 
\begin{eqnarray*}
\s
&=&E^* \begin{pmatrix} \s_0 & 0\\ 0 & \beta-\alpha^* \sigma_0^{-1} \alpha \end{pmatrix} E \\
&=&E^* \begin{pmatrix} \s_0 & 0\\ 0 & 0  \end{pmatrix} E
+E^* \begin{pmatrix} 0 & 0\\ 0 & \beta-\alpha^* \sigma_0^{-1} \alpha \end{pmatrix} E \\
&=&R\r R+\begin{pmatrix} 0 & 0\\ 0 & \beta-\alpha^* \sigma_0^{-1} \alpha \end{pmatrix}
\end{eqnarray*}
where
\[
\rho=\begin{pmatrix} \r_0 & 0\\ 0 & 0 \end{pmatrix},
\quad
\sigma=\begin{pmatrix} \s_0 & \a\\ \a^* & \beta \end{pmatrix},
\quad
E:=\begin{pmatrix} I & \s_0^{-1} \a\\ 0& I \end{pmatrix},
\quad 
R=E^* \begin{pmatrix} \s_0 \# \r_0^{-1} & 0\\ 0 & I \end{pmatrix} E
\]
with $\r_0>0$ and $\s_0>0$. 
Since
\[
 \begin{pmatrix} \r_0 & 0\\ 0 & 0 \end{pmatrix}
 \perp
 \begin{pmatrix} 0 & 0\\ 0 & \beta-\alpha^* \sigma_0^{-1} \alpha \end{pmatrix},
\]
we have the following decomposition:
\begin{equation}\label{eqn:LebDec1}
 \s=\s^a+\s^\perp,
\end{equation}
where
\begin{equation}\label{eqn:LebDec1AC}
\s^a:=R\r R=\begin{pmatrix} \s_0 & \a \\ \a^* & \alpha^* \sigma_0^{-1} \alpha \end{pmatrix}
\end{equation}
is the (mutually) {\em absolutely continuous part }of $\s$ with respect to  $\r$, 
and 
\begin{equation}\label{eqn:LebDec1singular}
\s^\perp:=\begin{pmatrix} 0 & 0\\ 0 & \beta-\alpha^* \sigma_0^{-1} \alpha \end{pmatrix}
\end{equation}
is the {\em singular part} of $\s$ with respect to  $\r$. 

We shall call the decomposition \eqref{eqn:LebDec1} a {\em quantum Lebesgue decomposition} for the following reasons. 
First, 
although (\ref{eqn:LebDec1}) was defined by using a simultaneous block matrix representation of $\r$ and $\s$, which has an arbitrariness of unitary transformations of the form $U_1\oplus U_2$, 
the matrices (\ref{eqn:LebDec1AC}) and (\ref{eqn:LebDec1singular}) are covariant under those unitary transformations, and hence 
the operators $\s^a$ and $\s^\perp$ are well-defined regardless of the arbitrariness of the block matrix representation. 
Second, 
the decomposition (\ref{eqn:LebDec1}) is unique, as the following lemma asserts. 

\begin{lemma}\label{lem:uniqueness1}
Suppose $\s\gg\r$. Then the decomposition
\begin{equation}\label{eqn:qLebesgue1}
 \s=\s^a+\s^\perp\qquad (\s^a \ll \r,\;\s^\perp \perp \r)
\end{equation}
is uniquely given by (\ref{eqn:LebDec1AC}) and (\ref{eqn:LebDec1singular}).
\end{lemma}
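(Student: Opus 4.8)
The plan is to prove uniqueness by reducing to the block-matrix picture and exploiting the rigidity imposed by the two defining conditions $\s^a\ll\r$ and $\s^\perp\perp\r$. First I would fix a simultaneous block matrix representation as in \eqref{eqn:blockRS}, with $\r=\mathrm{diag}(\r_0,0)$, $\r_0>0$, and $\s=\begin{pmatrix}\s_0 & \a\\ \a^* & \b\end{pmatrix}$; note that $\s\gg\r$ gives $\s_0>0$ by definition of $\ll$. Suppose $\s=\s^a+\s^\perp$ is \emph{any} decomposition satisfying \eqref{eqn:qLebesgue1}. Write $\s^a$ and $\s^\perp$ in the same block form. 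The first key step is to extract information from $\s^\perp\perp\r$: by Lemma~\ref{lem:1}, $\s^\perp\perp\r$ is equivalent to $\supp\s^\perp\perp\supp\r$, which forces the entire first block row and column of $\s^\perp$ to vanish, so $\s^\perp=\begin{pmatrix}0&0\\0&\b'\end{pmatrix}$ for some $\b'\ge 0$. Consequently $\s^a=\s-\s^\perp=\begin{pmatrix}\s_0 & \a\\ \a^* & \b-\b'\end{pmatrix}$, and the $(1,1)$- and $(1,2)$-blocks of $\s^a$ are already pinned down; the only freedom left is the choice of the $(2,2)$-block $\b-\b'$.

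The second key step is to show the condition $\s^a\ll\r$ removes this last freedom, forcing $\b-\b'=\a^*\s_0^{-1}\a$. By Lemma~\ref{lem:2}, $\s^a\ll\r$ means $\s^a=R\s^a R$ fails to be the right tool; rather I would use the characterization $\r\ll\s^a$ is \emph{not} what we want either, so instead observe that $\s^a\ll\r$ by definition means $\r\!\!\downharpoonleft_{\supp\s^a}>0$ — but the cleanest route is: $\s^a\ll\r \iff \exists R>0$ with $\s^a = R\r R$ only up to a singular remainder, so I will instead argue directly. Since $\s^a\ge 0$ with $(1,1)$-block $\s_0>0$, the Schur-complement factorization $\s^a=E^*\,\mathrm{diag}(\s_0,\ (\b-\b')-\a^*\s_0^{-1}\a)\,E$ with $E=\begin{pmatrix}I&\s_0^{-1}\a\\0&I\end{pmatrix}$ holds, and positivity of $\s^a$ forces the Schur complement $S:=(\b-\b')-\a^*\s_0^{-1}\a\ge 0$. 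Now $\s^a\ll\r$ together with Lemma~\ref{lem:2}(d) gives $\r = R'\s^a R'$ for some $R'\ge 0$; decomposing $R'$ as in the proof of (d)$\Rightarrow$(a) shows $\supp\r = \supp\r_0$ and the relation forces $\r_0\sim (\text{the }(1,1)\text{-block of }\s^a) = \s_0$ while annihilating everything outside; tracking the $(2,2)$-block of $R'\s^a R'$ being $0$ together with $R'\ge 0$ forces $\mathrm{supp}\,\s^a \subseteq \mathrm{supp}\,\r_0 \oplus (\text{something killed})$, and pairing this with $S\ge 0$ in the range complementary to $\s_0$ yields $S=0$. Hence $\b-\b' = \a^*\s_0^{-1}\a$, which is exactly \eqref{eqn:LebDec1AC}, and then $\s^\perp$ is forced to be \eqref{eqn:LebDec1singular}.

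The main obstacle I anticipate is the second step: translating "$\s^a\ll\r$" into the precise statement that the Schur complement $S$ must vanish. The subtlety is that $\ll$ in this noncommutative setting is strictly weaker than support containment (as the pure-state example in the excerpt emphasizes), so one cannot simply say $\supp\s^a\subseteq\supp\r$. The right move is to use the equivalent form from Lemma~\ref{lem:2}: $\s^a\ll\r$ iff $\r = R\,\s^a\, R$ for some $R\ge 0$, and then exploit that $\r$ has zero $(2,2)$-block in a basis where $\r_0>0$ occupies the $(1,1)$-slot; writing out $R\s^a R$ in blocks and using $R\ge 0$, $\s^a\ge 0$ to conclude that any component of $\s^a$ "visible" in the $(1,1)$-corner must be matched, while the leftover Schur part, being orthogonal in the appropriate sense, cannot contribute to reconstructing $\r$ and is therefore constrained to be singular — but it is already part of $\s^a$, which must be $\ll\r$, not $\perp\r$; the only way a positive operator can be both the "absolutely continuous" piece and have a singular Schur component is for that component to be zero. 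I would make this rigorous by noting $\s^a = (\s^a)^a + (\s^a)^\perp$ via \eqref{eqn:LebDec1AC}--\eqref{eqn:LebDec1singular} applied to the pair $(\r,\s^a)$, observing $(\s^a)^\perp$ has the block form $\mathrm{diag}(0,S)$, and then showing that $\s^a\ll\r$ forces $(\s^a)^\perp=0$ — which follows because $\s^a\ll\r$ gives via Lemma~\ref{lem:2}(d) a representation $\r=R\s^a R$, and substituting $\s^a=(\s^a)^a+(\s^a)^\perp$ with $(\s^a)^\perp\perp\r$ (hence $R(\s^a)^\perp R$ contributes nothing in $\supp\r$ by Lemma~\ref{lem:1}) yields that $R(\s^a)^\perp R = 0$, and since $R$ restricted to $\supp\r$ is invertible while $(\s^a)^\perp$ lives in $\ker\r$, a direct positivity argument on $0 = R(\s^a)^a R + R(\s^a)^\perp R$ compared against $\r = R\s^a R$ closes the gap. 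Once $S=0$ is established the rest is immediate bookkeeping.
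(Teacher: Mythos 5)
Your first step is sound and coincides with the paper's: writing $\rho=\mathrm{diag}(\rho_0,0)$ with $\rho_0>0$, the condition $\sigma^\perp\perp\rho$ plus positivity forces $\sigma^\perp=\begin{pmatrix}0&0\\0&\beta'\end{pmatrix}$, so the $(1,1)$- and $(1,2)$-blocks of $\sigma^a$ are pinned to $\sigma_0$ and $\alpha$, and only the $(2,2)$-block is in question. The gap is in the step that is supposed to kill that remaining freedom. You invoke Lemma \ref{lem:2}(d) in the wrong direction: for the pair at hand, item (d) says that $\sigma^a\ll\rho$ is equivalent to the existence of $R\ge0$ with $\sigma^a=R\rho R$, whereas you repeatedly work from a representation $\rho=R'\sigma^a R'$, which characterizes the \emph{other} relation $\rho\ll\sigma^a$. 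That relation holds automatically as soon as $\sigma_0>0$ and carries no information about the Schur complement $S=(\beta-\beta')-\alpha^*\sigma_0^{-1}\alpha$. Concretely, take $\rho=\mathrm{diag}(1,0)$ and $\sigma=I$ on $\C^2$: the wrong decomposition $\sigma^a=I$, $\sigma^\perp=0$ satisfies $\sigma^\perp\perp\rho$ and $\rho=R'\sigma^a R'$ with $R'=\mathrm{diag}(1,0)$, yet $S=1\neq0$ and $\sigma^a\ll\rho$ fails. So every fact you actually use is also satisfied by a non-unique decomposition, and the concluding clauses (``something killed'', ``$R(\sigma^a)^\perp R=0$'' --- which in any case would not give $(\sigma^a)^\perp=0$ since $R'$ need not be injective on $\ker\rho$) do not constitute an argument.

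The repair is short and is essentially what the paper does. Use the correct direction of Lemma \ref{lem:2}(d): $\sigma^a\ll\rho$ gives $\sigma^a=R\rho R$ with $R=\begin{pmatrix}R_0&R_1\\R_1^*&R_2\end{pmatrix}\ge0$, and then $R\rho R=\begin{pmatrix}R_0\rho_0R_0&R_0\rho_0R_1\\R_1^*\rho_0R_0&R_1^*\rho_0R_1\end{pmatrix}$. Matching the $(1,1)$-block gives $R_0\rho_0R_0=\sigma_0>0$, hence $R_0>0$ (and $R_0=\sigma_0\#\rho_0^{-1}$); the $(1,2)$-block gives $R_1=\rho_0^{-1}R_0^{-1}\alpha$; and the $(2,2)$-block is then forced to equal $R_1^*\rho_0R_1=\alpha^*\left(R_0\rho_0R_0\right)^{-1}\alpha=\alpha^*\sigma_0^{-1}\alpha$, which is exactly \eqref{eqn:LebDec1AC} and determines $\sigma^\perp$ as \eqref{eqn:LebDec1singular}. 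The paper performs this same comparison after conjugating by the invertible matrix $E$, which block-diagonalizes $\sigma$ and tidies the bookkeeping, but the content is identical. (A rank count also finishes your Schur-complement variant: $\sigma^a=R\rho R$ forces $\rank\sigma^a\le\rank\rho_0$, while $\sigma^a=E^*\,\mathrm{diag}(\sigma_0,S)\,E$ gives $\rank\sigma^a=\rank\rho_0+\rank S$, whence $S=0$.)
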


\begin{proof}
We show that the decomposition
\begin{equation}\label{eqn:uniqueness1}
 \s=R\r R+\t\qquad (R \ge 0,\,\t\ge 0,\; \t \perp \r)
\end{equation}
is unique. 
Let
\[
\rho=\begin{pmatrix} \r_0 & 0\\ 0 & 0 \end{pmatrix},
\quad
\sigma=\begin{pmatrix} \s_0 & \a\\ \a^* & \beta \end{pmatrix}
\]
with $\r_0>0$. 
Due to assumption $\r\ll\s$, we have $\s_0>0$. 
Let
\[
E:=\begin{pmatrix} I & \s_0^{-1} \a\\ 0& I \end{pmatrix}.
\]
Since $E$ is invertible, the operator $R$ appeared in (\ref{eqn:uniqueness1}) is represented as
\[
R=E^* \begin{pmatrix} R_0 & R_1 \\ R_1^* & R_2 \end{pmatrix} E.
\]
With this representation
\begin{eqnarray*}
R\r R
&=& 
E^* \begin{pmatrix} R_0 & R_1 \\ R_1^* & R_2  \end{pmatrix} E
\begin{pmatrix} \rho_0 & 0\\ 0 & 0 \end{pmatrix}
E^* \begin{pmatrix} R_0 & R_1 \\ R_1^* & R_2 \end{pmatrix} E \\
&=& 
E^* 
\begin{pmatrix} R_0 \r_0 R_0 & R_0\r_0 R_1 \\ R_1^*\r_0 R_0 & R_1^*\r_0 R_1  \end{pmatrix} 
E\\
&\le& 
\s
=E^* \begin{pmatrix} \s_0 & 0\\ 0 & \beta-\alpha^* \sigma_0^{-1} \alpha \end{pmatrix} E.
\end{eqnarray*}
Here, the inequality is due to (\ref{eqn:uniqueness1}). 
Let us denote the singular part $\t$ as
\[
 \t=\begin{pmatrix} 0 & 0 \\ 0 & \t_0 \end{pmatrix}
 =E^* \begin{pmatrix} 0 & 0 \\ 0 & \t_0 \end{pmatrix} E.
\]
Then the decomposition (\ref{eqn:uniqueness1}) is equivalent to 
\begin{equation}\label{eqn:uniqueness1-1}
 \begin{pmatrix} \s_0 & 0\\ 0 & \beta-\alpha^* \sigma_0^{-1} \alpha \end{pmatrix} 
 = \begin{pmatrix} R_0 \r_0 R_0 & R_0\r_0 R_1 \\ R_1^*\r_0 R_0 & R_1^*\r_0 R_1  \end{pmatrix} 
 + \begin{pmatrix} 0 & 0 \\ 0 & \t_0 \end{pmatrix}.
\end{equation}
Comparison of the $(1,1)$th blocks of both sides yields 
$R_0=\s_0 \# \r_0^{-1}$. 
Since this $R_0$ is strictly positive, comparison of other blocks of \eqref{eqn:uniqueness1-1} further yields
\[
 R_1=0\quad\mbox{and}\quad \t_0=\beta-\alpha^* \sigma_0^{-1} \alpha.
\]
Consequently, the singular part $\t$ is uniquely determined by (\ref{eqn:LebDec1singular}). 
\end{proof}

An immediate consequence of Lemma \ref{lem:uniqueness1} is the following

\begin{corollary}\label{cor:uniqueness1}
When $\s\gg\r$, the absolutely continuous part $\s^a$ of the quantum Lebesgue decomposition \eqref{eqn:qLebesgue1}
is in fact mutually absolutely continuous to $\r$, i.e., $\s^a\sim\r$.
\end{corollary}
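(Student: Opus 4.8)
The plan is to read off the conclusion directly from the explicit formula for $\s^a$ obtained in Lemma~\ref{lem:uniqueness1}. Recall that, in a simultaneous block matrix representation with $\r_0>0$ and (by the hypothesis $\s\gg\r$) $\s_0>0$, we have
\[
\s^a=\begin{pmatrix} \s_0 & \a \\ \a^* & \alpha^* \sigma_0^{-1} \alpha \end{pmatrix},
\qquad
\r=\begin{pmatrix} \r_0 & 0\\ 0 & 0 \end{pmatrix}.
\]
By the characterization of mutual absolute continuity recalled in \eqref{eqn:mutuallyAC}, it suffices to verify the two conditions $\s^a\!\!\downharpoonleft_{\supp\r}>0$ and $\r\!\!\downharpoonleft_{\supp\s^a}>0$.

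First I would check $\s^a\!\!\downharpoonleft_{\supp\r}>0$: this excision is simply the $(1,1)$-block $\s_0$ of $\s^a$, which is strictly positive because $\s\gg\r$ forces $\s_0>0$. So the first condition is immediate. The second condition requires understanding $\supp\s^a$. Here I would use the factorization $\s^a=R\r R$ with $R=E^*\,({\s_0 \# \r_0^{-1}}\oplus I)\,E$ invertible, together with $\supp\r=\supp\r_0$; since $R$ is invertible, $\rank\s^a=\rank(R\r R)=\rank\r$. Equivalently, one computes directly from the block form that $\s^a = E^*(\s_0\oplus 0)E$ with $E$ invertible, so $\supp\s^a = E^{-1}(\supp\s_0 \oplus 0)$, a subspace of the same dimension as $\supp\r$. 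Then $\r\!\!\downharpoonleft_{\supp\s^a}=\iota_{\s^a}^*\,\r\,\iota_{\s^a}$ is a positive operator on a space of dimension $\rank\r$, and it is strictly positive because $\rank(\r\!\!\downharpoonleft_{\supp\s^a})=\rank\r$ (no rank can be lost when restricting $\r$ to a support of matching dimension, as $\supp\r\subset$ the relevant span — more carefully, $\r\ll\s^a$ by symmetry of the argument since $\s^a\ll\r$ has just been shown and $\rank\s^a=\rank\r$, and then \eqref{eqn:mutuallyAC} closes the loop).

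The cleanest route, and the one I would actually write, avoids even this: since $\s^a=R\r R$ with $R>0$, Lemma~\ref{lem:2}(b) gives $\s^a\gg\r$; and since $\rank\s^a=\rank\r$ (because $R$ is invertible — indeed $R=E^*(\s_0\#\r_0^{-1}\oplus I)E$ is a product of invertibles), the characterization quoted before \eqref{eqn:mutuallyAC} — namely that $\r\sim\s^a$ iff $\s^a\!\!\downharpoonleft_{\supp\r}>0$ and $\rank\r=\rank\s^a$ — applies and yields $\s^a\sim\r$. The only mild obstacle is bookkeeping: being careful that "$\s_0\#\r_0^{-1}$" is genuinely invertible (it is, as a geometric mean of two strictly positive operators) so that $R$ is invertible and the rank is preserved; once that is in hand the corollary is essentially a one-line consequence of Lemma~\ref{lem:2} and the rank criterion for mutual absolute continuity.
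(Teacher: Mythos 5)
Your proposal is correct and takes essentially the same route as the paper, which presents the corollary as immediate from Lemma~\ref{lem:uniqueness1} together with the observation that the operator $R$ in $\s^a=R\r R$ can be chosen strictly positive, so that Lemma~\ref{lem:2} and the rank (equivalently, two-sided $\ll$) characterization of mutual absolute continuity apply. The only blemishes are cosmetic and are superseded by your final ``cleanest route'' paragraph: $\supp\s^a$ equals $E^*(\supp\s_0\oplus\{0\})$ rather than $E^{-1}(\supp\s_0\oplus\{0\})$ (only the dimension matters, so nothing breaks), and in the middle paragraph you momentarily interchange the directions $\r\ll\s^a$ and $\s^a\ll\r$.
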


Note that the operator $R_2$ appeared in the proof of Lemma \ref{lem:uniqueness1} is arbitrary as long as it is positive. 
Because of this arbitrariness, we can take the operator $R$ in (\ref{eqn:uniqueness1}) to be strictly positive. 
This gives an alternative view of Corollary \ref{cor:uniqueness1}.

\subsection{Case 2:  generic case}

Let us extend the quantum Lebesgue decomposition \eqref{eqn:qLebesgue1} to a generic case when 
$\r$ is not necessarily absolutely continuous with respect to $\s$. 
When $\r$ and $\s$ are mutually singular, however, we just let $\s^a=0$ and $\s^\perp=\s$. 
In the rest of this section, therefore, we assume that $\r$ and $\s$ are not mutually singular. 

Given positive operators $\r$ and $\s$ that satisfy $\r\not\perp\s$, 
let $\H=\H_1\oplus\H_2\oplus\H_3$ be the orthogonal direct sum decomposition defined by
\[ 
 \H_1:=\ker\left(\sigma\!\!\downharpoonleft_{\supp\rho}\right),\qquad 
 \H_2:=\supp\left(\sigma\!\!\downharpoonleft_{\supp\rho}\right),\qquad 
 \H_3:=\ker\r. 
\]
Then $\r$ and $\s$ are represented in the form of block matrices as follows:
\begin{equation}\label{eqn:blockMatrix}
\rho=\begin{pmatrix} \r_2 & \r_1 & 0\\ \r_1^* & \r_0 & 0 \\ 0 & 0 & 0 \end{pmatrix}, 
\qquad
\sigma=\begin{pmatrix} 0 & 0 & 0 \\ 0 & \s_0 & \a \\ 0 & \a^* & \beta \end{pmatrix},
\end{equation}
where 
\[
\begin{pmatrix} \r_2 & \r_1 \\ \r_1^* & \r_0 \end{pmatrix}>0,\qquad \s_0>0.
\]
Note that when 
$\s \gg \r$ (Case 1), the subspace $\H_1$ becomes zero; in this case, 
the first rows and columns in \eqref{eqn:blockMatrix} should be ignored.
Likewise, when $\r>0$, the subspace $\H_3$ becomes zero; in this case, 
the third rows and columns in \eqref{eqn:blockMatrix} should be ignored. 

\begin{figure}[t] 
	\begin{centering}
	\includegraphics[scale=0.3]{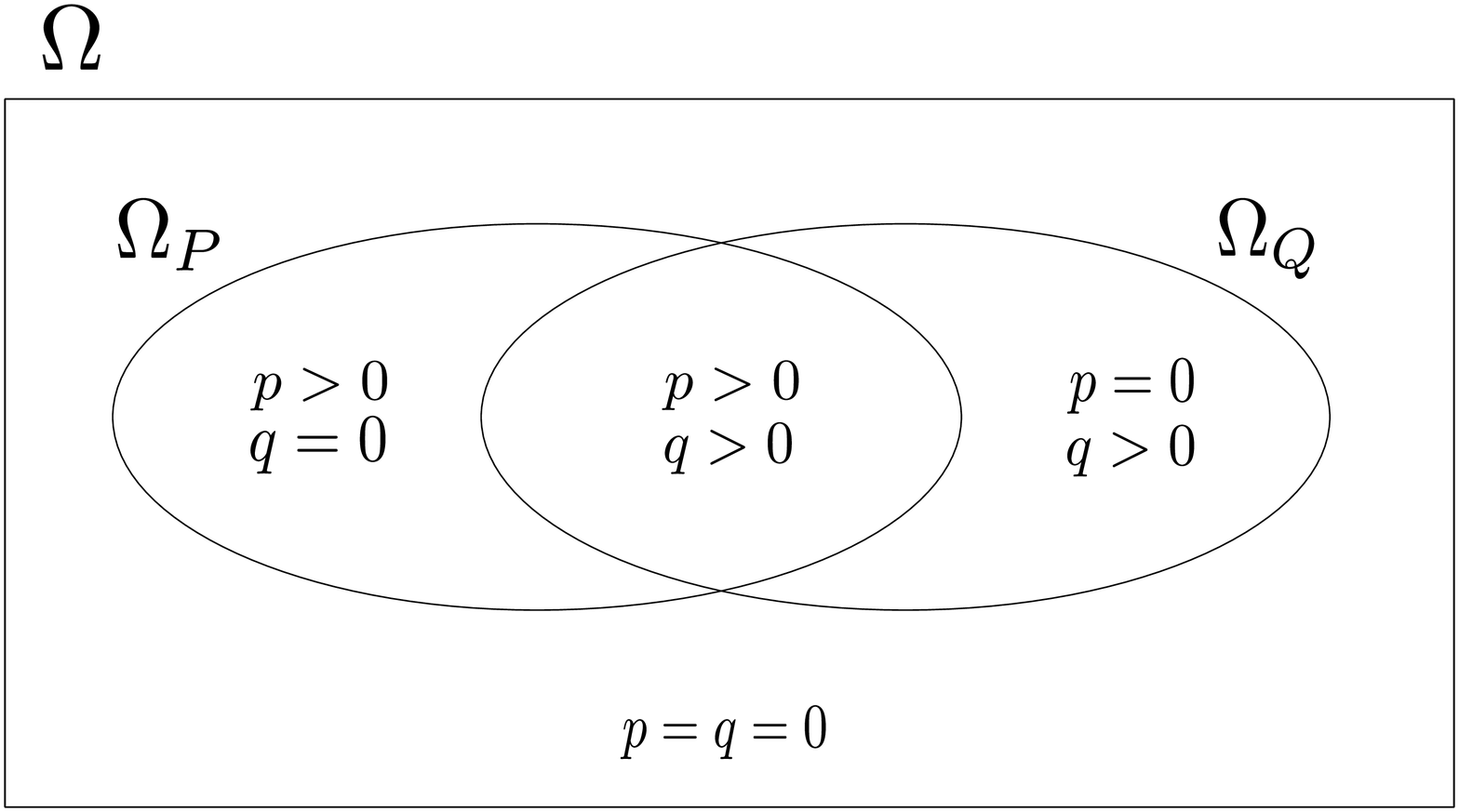}
	\par
	\end{centering}
	\caption{
	Schematic diagram of support sets of measures $P$ and $Q$ on a classical measure space 
	$(\O,\F,\m)$ having densities $p$ and $q$, respectively. 
	Here $\O_P:=\{\o\in\O\,|\, p(\o)>0\}$ and $\O_Q:=\{\o\in\O\,|\, q(\o)>0\}$. 
	The induced measures $Q^a(A):=Q(A\cap \{p>0\})$ and $Q^\perp(A):=Q(A\cap \{p=0\})$ give the 
	Lebesgue decomposition $Q=Q^a+Q^\perp$ with respect to $P$, 
	in which $Q^a\ll P$ and $Q^\perp\perp P$, (cf. \cite[Chapter 6]{Vaart}).
	\label{fig:support}}
\end{figure}

There is an obvious resemblance between the block matrix structure in \eqref{eqn:blockMatrix} 
and the diagram depicted in Fig.~\ref{fig:support} that displays the support sets of two measures $P$ and $Q$ on a classical measure space 
$(\O,\F,\m)$ having densities $p$ and $q$, respectively. 
However,  it should be warned that
\[
 \H_1':=\supp\r\cap \ker\s,\qquad \H_2':=\supp\r\cap\supp\s
\]
are different from $\H_1$ and $\H_2$, respectively. 
This is most easily seen by considering the case when both $\r$ and $\s$ are pure states: 
for pure states $\r=\ket{\psi}\bra{\psi}$ and $\s=\ket{\xi}\bra{\xi}$, we see that $\H_2\neq \{0\}$ if and only if $\braket{\xi}{\psi}\neq 0$, (cf. \cite[Example 2.3]{YFG}), while $\H_2'\neq\{0\}$ if and only if $\r=\s$.

Let us rewrite $\s$ in the form
\[
\sigma
=E^*
\begin{pmatrix}
0 & 0 & 0\\
0 & \s_0 & 0 \\
0 & 0 & \beta-\alpha^* \sigma_0^{-1} \alpha
\end{pmatrix}
E,
\]
where
\[
E:=
\begin{pmatrix}
I & 0 & 0\\
0 & I & \s_0^{-1} \a \\
0 & 0 & I
\end{pmatrix}.
\]
Since $E$ is invertible and $\s\ge 0$, we see that
\[
\beta-\alpha^* \sigma_0^{-1} \alpha\ge 0.
\]
Now let 
\[
\s^a
:=E^*
\begin{pmatrix} 0 & 0 & 0\\ 0 & \s_0 & 0 \\ 0 & 0 & 0\end{pmatrix}
E
=\begin{pmatrix} 0 & 0 & 0  \\ 0 & \s_0 & \a \\ 0 & \a^* & \a^* \s_0^{-1} \a \end{pmatrix}
\]
and let
\[
\s^\perp:=
E^*
\begin{pmatrix} 0 & 0 & 0 \\ 0 & 0 & 0\\ 0 & 0 & \b-\a^* \s_0^{-1} \a \end{pmatrix}
E
=\begin{pmatrix} 0 & 0 & 0 \\ 0 & 0 & 0\\ 0 & 0 & \b-\a^* \s_0^{-1} \a \end{pmatrix}.
\]
Then it is shown that $\s^a\ll\r$ and $\s^\perp \perp \r$.
In fact, the latter is obvious from Lemma \ref{lem:1}. 
To prove the former, let
\[
R:=E^* \begin{pmatrix} 0 & 0 & 0 \\ 0 & \s_0 \# \r_0^{-1} & 0 \\ 0 & 0 & 0 \end{pmatrix} E.
\]
Then $R$ is a positive operator satisfying
\begin{eqnarray*}
R\r R
&= &
E^*
\begin{pmatrix} 0 & 0 & 0 \\ 0 & \s_0 \# \r_0^{-1} & 0 \\ 0 & 0 & 0 \end{pmatrix}
\begin{pmatrix} \r_2 & \r_1 & 0\\ \r_1^* & \r_0 & 0 \\ 0 & 0 & 0 \end{pmatrix}
\begin{pmatrix} 0 & 0 & 0 \\ 0 & \s_0 \# \r_0^{-1} & 0 \\ 0 & 0 & 0 \end{pmatrix}
E \\
&= &
E^*
\begin{pmatrix} 0 & 0 & 0\\ 0 & \s_0 & 0 \\ 0 & 0 & 0\end{pmatrix}
E
=
\s^a.
\end{eqnarray*}
It then follows from Lemma \ref{lem:2} that $\s^a\ll\r$.

In summary, given $\r$ and $\s$ that satisfy $\s\not\perp\r$, let 
\begin{equation}\label{eqn:simultaneousBlock}
\rho=\begin{pmatrix} \r_2 & \r_1 & 0\\ \r_1^* & \r_0 & 0 \\ 0 & 0 & 0 \end{pmatrix}, 
\qquad
\sigma=\begin{pmatrix} 0 & 0 & 0 \\ 0 & \s_0 & \a \\ 0 & \a^* & \beta \end{pmatrix}
\end{equation}
be their simultaneous block matrix representations, where
\[
\begin{pmatrix} \r_2 & \r_1 \\ \r_1^* & \r_0 \end{pmatrix}>0,\qquad \s_0>0.
\]
Then
\begin{equation}\label{eqn:LebDec2}
\s^a
=\begin{pmatrix} 0 & 0 & 0  \\ 0 & \s_0 & \a \\ 0 & \a^* & \a^* \s_0^{-1} \a \end{pmatrix},
\qquad
\s^\perp
=\begin{pmatrix} 0 & 0 & 0 \\ 0 & 0 & 0\\ 0 & 0 & \b-\a^* \s_0^{-1} \a \end{pmatrix}
\end{equation}
give the following decomposition:
\begin{equation}\label{eqn:LebesgueDecomp}
 \s=\s^a+\s^\perp
 \qquad (\s^a\ll \r,\;\s^\perp \perp \r)
\end{equation}
with respect to $\r$.

As in the previous subsection, we may call \eqref{eqn:LebesgueDecomp} a {\em quantum Lebesgue decomposition} for the following reasons. 
First, 
although the simultaneous block representation (\ref{eqn:simultaneousBlock}) has arbitrariness of unitary transformations of the form $U_1\oplus U_2\oplus U_3$,
the operators $\s^a$ and $\s^\perp$ are well-defined 
because the matrices (\ref{eqn:LebDec2}) are covariant under those unitary transformations.
Second, 
the decomposition (\ref{eqn:LebesgueDecomp}) is unique, as the following lemma asserts. 

\begin{lemma}\label{lem:uniqueness2}
Given $\r$ and $\s$ with $\s\not\perp\r$, 
the decomposition 
\[ \s=\s^a+\s^\perp\qquad (\s^a \ll \r,\;\s^\perp \perp \r) \]
is uniquely given by (\ref{eqn:LebDec2}).
\end{lemma}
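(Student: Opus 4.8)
The plan is to reduce the generic case to the already-proven Case~1 (Lemma~\ref{lem:uniqueness1}) by passing to the subspace $\H_2\oplus\H_3$, on which $\s$ is supported and strictly positive relative to the relevant block. First I would fix a simultaneous block representation \eqref{eqn:simultaneousBlock} with respect to $\H=\H_1\oplus\H_2\oplus\H_3$, and suppose we are given any decomposition $\s=\s^a+\s^\perp$ with $\s^a\ll\r$ and $\s^\perp\perp\r$. The first step is to pin down the supports. Since $\s^\perp\perp\r$, Lemma~\ref{lem:1} gives $\supp\s^\perp\perp\supp\r$; since $\s^a\ll\r$, Lemma~\ref{lem:2}(d) gives $\s^a=R\s R$ for some $R\ge 0$, hence $\supp\s^a\subset\supp R$. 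More usefully, $\s^a\ll\r$ together with $\s^a\le\s$ should force $\supp\s^a\subset\supp\s=\H_2\oplus\H_3$, while $\s^a\ll\r$ forces $\s^a$ to have no component killed by the excision onto $\supp\r$; concretely I would show that in the block form $\s^a$ and $\s^\perp$ are both block-diagonal with vanishing $\H_1$-components, reducing everything to the two-block subspace $\H_2\oplus\H_3$.

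The key reduction: restrict to $\K:=\H_2\oplus\H_3=(\ker\s)^{\perp}$-type subspace (more precisely, $\H_2\oplus\H_3$ carries all of $\s$ and the nonzero part of $\r$'s excision structure). On $\K$ the restricted operators $\tilde\r:=\r\!\!\downharpoonleft_{\K}$ and $\tilde\s:=\s\!\!\downharpoonleft_{\K}=\s$ satisfy $\tilde\s\gg\tilde\r$, because by construction $\tilde\s\!\!\downharpoonleft_{\supp\tilde\r}$ is the $(2,2)$-block $\s_0>0$. Then Lemma~\ref{lem:uniqueness1} applies verbatim on $\K$ and yields that the decomposition $\tilde\s=\tilde\s^a+\tilde\s^\perp$ with $\tilde\s^a\ll\tilde\r$, $\tilde\s^\perp\perp\tilde\r$ is unique and given by the formulas \eqref{eqn:LebDec1AC}, \eqref{eqn:LebDec1singular} applied to the $(2,2),(2,3),(3,3)$-blocks — which is exactly \eqref{eqn:LebDec2}. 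I would then check that the notions $\ll$ and $\perp$ are insensitive to the ambient space: $\s^a\ll\r$ in $\H$ is equivalent to $\tilde\s^a\ll\tilde\r$ in $\K$, and similarly for $\perp$, once we know $\supp\s^a,\supp\s^\perp\subset\K$; this uses that the excision $\sigma\!\!\downharpoonleft_{\supp\rho}$ only sees the part of $\s$ living in $\supp\r\subset\K$.

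The step I expect to be the main obstacle is the claim that any admissible $\s^a$ must have support inside $\K=\H_2\oplus\H_3$, i.e.\ that the $\H_1$-components of $\s^a$ vanish. One direction of intuition — that the singular part $\s^\perp$ should absorb everything orthogonal to $\supp\r$ — is not literally available since $\supp\s^\perp\perp\supp\r$ does not place $\s^\perp$ inside $\H_1$ alone ($\H_1\subset\supp\r$!). Instead I would argue as follows: write $\s^a$ in the three-block form and use $\s^a\ll\r$ via Lemma~\ref{lem:2}(b), $\s^a\le R\r R$ for some $R>0$; combining with $\s^a\le\s$ and the fact that $\s$ has zero $\H_1$-row and column, positivity of $\s-\s^a=\s^\perp\ge 0$ forces the $(1,1)$-block of $\s^a$ to be $\le 0$, hence $=0$, hence (by positivity of $\s^a$) the entire first block row/column of $\s^a$ vanishes; therefore $\s^\perp$ inherits a zero first block row/column as well. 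After that, $\s^a$ and $\s^\perp$ genuinely live on $\K$, $\r$'s relevant part lives on $\K$, and Lemma~\ref{lem:uniqueness1} finishes the argument, giving \eqref{eqn:LebDec2}. A cosmetic point to handle carefully is that $\r$ restricted to $\K$ is the two-block operator $\begin{pmatrix}\r_0 & 0\\ 0 & 0\end{pmatrix}$ on $\H_2\oplus\H_3$ with $\r_0>0$, so the hypothesis $\tilde\s\gg\tilde\r$ of Lemma~\ref{lem:uniqueness1} holds precisely because $\s_0>0$; this is exactly where the definition of $\H_2$ as $\supp(\sigma\!\!\downharpoonleft_{\supp\rho})$ pays off.
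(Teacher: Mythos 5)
Your proposal is correct and follows essentially the same route as the paper: both arguments show that the $\H_1$-components must vanish and then reduce to Lemma \ref{lem:uniqueness1} on $\K=\H_2\oplus\H_3$; the paper kills the first block row and column of the operator $R$ in the representation $\s^a=R\r R$ via $0=\bracket{x}{\s x}\ge\bracket{Rx}{\r Rx}$ for $x\in\H_1$, whereas you kill those of $\s^a$ directly from $0\le\s^a\le\s$, which is only a cosmetic difference. One small slip worth fixing: in justifying that $\ll$ is insensitive to the ambient space you invoke $\supp\r\subset\K$, which is false when $\H_1\neq\{0\}$; what is actually needed, and true, is $\supp\s^a\subset\K$, since $\s^a\ll\r$ means $\r\!\!\downharpoonleft_{\supp\s^a}>0$ and this excision only sees the part of $\r$ inside $\supp\s^a$.
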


\begin{proof}
We show that the decomposition
\begin{equation}\label{eqn:uniqueness2}
 \s=R\r R+\t\qquad (R\ge 0,\,\t\ge 0,\; \t \perp \r)
\end{equation}
is unique.
Because of Lemma \ref{lem:uniqueness1}, it suffices to treat the case when $\s\not\gg\r$, that is, when $\H_1\neq \{0\}$. 

Let $\r$ and $\s$ be represented as \eqref{eqn:simultaneousBlock}. 
It then follows from \eqref{eqn:uniqueness2} that, for any $x\in \H_1$, 
\[
 0=\bracket{x}{\s x}\ge \bracket{x}{R \r Rx}=\bracket{Rx}{\r Rx}.
\]
This implies that $Rx\in\ker\r\,(=\H_3)$: in particular, 
$\bracket{x}{Rx}=0$, so that 
the $(1,1)$th block of $R$ is zero. 
This fact, combined with the positivity of $R$, entails that $R$ must have the form 
\[
R=
\begin{pmatrix} 
0 & 0 & 0 \\ 
0 & R_0 & R_1 \\ 
0 & R_1^* & R_2  
\end{pmatrix}.
\]
Consequently, the problem is reduced to finding the decomposition
\begin{equation}\label{eqn:uniqueness2-1}
 \hat\s=\hat R \hat\r \hat R+\hat\t\qquad (\hat R\ge 0,\,\hat\t\ge 0,\; \hat\t \perp \hat\r),
\end{equation}
where
\[
\hat\r=
\begin{pmatrix} 
\r_0 & 0 \\ 
0 & 0  
\end{pmatrix},
\qquad 
\hat\s=
\begin{pmatrix} 
\s_0 & \a \\ 
\a^* & \b  
\end{pmatrix},
\qquad 
\hat R=
\begin{pmatrix} 
R_0 & R_1 \\ 
R_1^* & R_2 
\end{pmatrix}.
\]
Since $\hat{\r} \ll \hat\s$, the uniqueness of the decomposition \eqref{eqn:uniqueness2-1} has been established in Lemma \ref{lem:uniqueness1}. 
This completes the proof. 
\end{proof}

\begin{remark}
The operator $R$ appeared in the proof of Lemma \ref{lem:uniqueness2} is written as
\begin{equation}\label{eqn:remark}
R=\sqrt{\s} \left(\sqrt{\sqrt{\s} \r \sqrt{\s}} \right)^+ \sqrt{\s} +\ga,
\end{equation}
where $A^+$ denotes the generalized inverse of an operator $A$, and $\ga$ is an arbitrary positive operator that is singular with respect to $\r$. 
\end{remark}

\begin{proof}
Recall that $\s$ is decomposed as $\s=E^*\tilde\s E$, where
\[
E=
\begin{pmatrix}
I & 0 & 0\\
0 & I & \s_0^{-1} \a \\
0 & 0 & I
\end{pmatrix},
\qquad
\tilde\s=
\begin{pmatrix}
0 & 0 & 0\\
0 & \s_0 & 0 \\
0 & 0 & \beta-\alpha^* \sigma_0^{-1} \alpha
\end{pmatrix}.
\]
Then there is a unitary operator $U$ that satisfies
\[
 \sqrt{\tilde\s}\,E=U\sqrt{\s}, 
\]
and the operator $R$, modulo the singular part $R_2$, is given by
\begin{eqnarray*}
E^* \begin{pmatrix} 0 & 0 & 0 \\ 0 & \s_0 \# \r_0^{-1} & 0 \\ 0 & 0 & 0 \end{pmatrix} E
&=&
E^* \begin{pmatrix} 0 & 0 & 0 \\ 
0 & \sqrt{\s_0}\left(\sqrt{\sqrt{\s_0}\r_0\sqrt{\s_0}\,}\right)^{-1}\sqrt{\s_0} & 0 \\ 
0 & 0 & 0 \end{pmatrix} E \\
&=&
E^* \sqrt{\tilde\s}\left(\sqrt{\sqrt{\tilde\s}\r \sqrt{\tilde\s}\,}\right)^{+}\sqrt{\tilde\s}\, E\\
&=&
E^* \sqrt{\tilde\s}\left(\sqrt{\sqrt{\tilde\s} E\r E^*\sqrt{\tilde\s}\,}\right)^{+}\sqrt{\tilde\s}\, E\\
&=&
\sqrt{\s}\, U^*\left(\sqrt{ U\sqrt{\s} \r \sqrt{\s} U^*\,}\right)^{+} U \sqrt{\s} \\
&=&
\sqrt{\s}\, U^*\left(U\sqrt{\sqrt{\s} \r \sqrt{\s}\,}U^*\right)^{+} U \sqrt{\s} \\
&=&
\sqrt{\s} \left(\sqrt{ \sqrt{\s} \r \sqrt{\s}\,}\right)^{+} \sqrt{\s}.
\end{eqnarray*}
This proves the claim.
\end{proof}

\section{Quantum local asymptotic normality}\label{sec:qLAN}

In \cite{YFG}, we developed a theory of quantum local asymptotic normality (q-LAN) for models that comprise mutually absolutely continuous density operators. 
In this section we extend the scope of q-LAN to a wider class of models. 
For the reader's convenience, some basic terms and notations frequently used in q-LAN theory are summarized in Appendix.

Suppose that $\r$ is absolutely continuous with respect to $\s$, i.e., $\r\ll\s$.
Then we see from Corollary \ref{cor:uniqueness1} that the absolutely continuous part $\s^a$ 
of the quantum Lebesgue decomposition 
\[
 \s=\s^a+\s^\perp\qquad (\s^a\ll \r,\;\s^\perp \perp \r)
\]
is in fact mutually absolutely continuous, i.e., $\s_a\sim\r$.
By analogy with classical statistics \cite[Chapter 6]{Vaart}, we define the {\em quantum log-likelihood ratio} by 
\[ \L(\s |\r):=\L(\s^a |\r), \]
that is, the Hermitian operator $\L$ that satisfies 
\[
 \s^a=e^{\frac{1}{2}\L}\r e^{\frac{1}{2}\L}.
\]
This generalization enables us to get rid of the assumption of mutual absolute continuity in the theory of q-LAN. (See also the remark presented at the end of this section.)

\begin{define}[q-LAN]\label{def:QLAN}
Given a sequence $\H^{(n)}$ of finite dimensional Hilbert spaces indexed by $n\in\N$, let $\S^{(n)}=\left\{ \rho_{\theta}^{(n)} \left|\,\theta\in\Theta\subset\R^{d}\right.\right\} $
be a quantum statistical model on $\H^{(n)}$, where $\rho_{\theta}^{(n)}$
is a parametric family of density operators and $\Theta$ is an open
set. We say $\S^{(n)}$ is {\em quantum locally asymptotically normal (q-LAN)} at $\theta_{0}\in\Theta$ 
if the following conditions are fulfilled:
\begin{itemize}
\item [{\rm (i)}] $\rho_{\theta}^{(n)}\gg\rho_{\theta_{0}}^{(n)}$ for all $\theta\in\Theta$ and $n\in\N$, 
\item [{\rm (ii)}] there exist a list $\Delta^{(n)}=\left(\Delta_{1}^{(n)},\dots,\Delta_{d}^{(n)}\right)$
of observables on each $\H^{(n)}$ that satisfies
\[
\Delta^{(n)}\convd{\;\; \rho_{\theta_{0}}^{(n)}} \;N(0,J),
\]
where $J$ is a $d\times d$ Hermitian positive semidefinite matrix
with ${\rm Re}\,J>0$, 
\item [{\rm (iii)}] quantum log-likelihood ratio $\L_{h}^{(n)}
=\L\left(\left.\rho_{\theta_{0}+h/\sqrt{n}}^{(n)}\right|\rho_{\theta_{0}}^{(n)}\right)$
is expanded in $h\in\R^{d}$ as
\[
\L_{h}^{(n)}=h^{i}\Delta_{i}^{(n)}-\frac{1}{2}\left(J_{ij}h^{i}h^{j}\right)I^{(n)}+o\left(\Delta^{(n)},\rho_{\theta_{0}}^{(n)}\right),
\]
where $I^{(n)}$ is the identity operator on $\H^{(n)}$.
\end{itemize}
\end{define}

The scope of the joint quantum local asymptotic normality introduced in \cite{YFG} is also extended as follows.

\begin{define}[joint q-LAN]\label{def:jointQLAN} 
Let $\S^{(n)}=\left\{ \rho_{\theta}^{(n)}\left|\,\theta\in\Theta\subset\R^{d}\right.\right\}$
be as in Definition \ref{def:QLAN}, and let $X^{(n)}=\left(X_{1}^{(n)},\dots,X_{r}^{(n)}\right)$
be a list of observables on $\H^{(n)}$. We say the pair $\left(\S^{(n)},X^{(n)}\right)$
is {\em jointly quantum locally asymptotically normal (jointly q-LAN)} at $\theta_{0}\in\Theta$ if the following conditions are fulfilled:
\begin{itemize}
\item [{\rm (i)}] $\rho_{\theta}^{(n)}\gg\rho_{\theta_{0}}^{(n)}$ for all $\theta\in\Theta$ and $n\in\N$,  
\item [{\rm (ii)}] there exist a list $\Delta^{(n)}=\left(\Delta_{1}^{(n)},\dots,\Delta_{d}^{(n)}\right)$
of observables on each $\H^{(n)}$ that satisfies
\[
\begin{pmatrix}X^{(n)}\\
\Delta^{(n)}
\end{pmatrix}
\convd{\;\; \rho_{\theta_{0}}^{(n)}}
N\left(\begin{pmatrix}0\\
0
\end{pmatrix},\begin{pmatrix}\Sigma & \tau\\
\tau* & J
\end{pmatrix}\right),
\]
where $\Sigma$ and $J$ are Hermitian positive semidefinite matrices
of size $r\times r$ and $d\times d$, respectively, with ${\rm Re}\,J>0$,
and $\tau$ is a complex matrix of size $r\times d$. 
\item [{\rm (iii)}] quantum log-likelihood ratio $\L_{h}^{(n)}=\L\left(\left.\rho_{\theta_{0}+h/\sqrt{n}}^{(n)}\right|\rho_{\theta_{0}}^{(n)}\right)$
is expanded in $h\in\R^{d}$ as
\[
\L_{h}^{(n)}=h^{i}\Delta_{i}^{(n)}-\frac{1}{2}\left(J_{ij}h^{i}h^{j}\right)I^{(n)}+o\left(\begin{pmatrix}X^{(n)}\\
\Delta^{(n)}
\end{pmatrix},\rho_{\theta_{0}}^{(n)}\right).
\]
\end{itemize}
\end{define}

With the above definitions, we obtain the following theorem, which is regarded as a quantum extension of Le Cam's third lemma.

\begin{theorem}[quantum Le Cam third lemma]\label{thm:LeCam}
	Let $\S^{(n)}$ and $X^{(n)}$ be as in Definition \ref{def:jointQLAN}. If $\left(\S^{(n)},X^{(n)}\right)$
is jointly q-LAN at $\theta_{0}\in\Theta$, then 
\[ X^{(n)}\convd{\;\;\rho_{\theta_0+h/\sqrt{n}}^{(n)}} \;N(({\rm Re}\,\tau) h,\Sigma) \]
for $h\in\R^{d}$. 
\end{theorem}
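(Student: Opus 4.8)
### Proof proposal

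The plan is to reduce the quantum statement to a computation with characteristic functions (quantum Fourier transforms) of the joint quasi-characteristic function of the pair $(X^{(n)}, \Delta^{(n)})$ under $\r_{\th_0}^{(n)}$, exactly as in the proof of the classical Le Cam third lemma, but using the q-LAN expansion of the log-likelihood ratio in place of the classical one. First I would recall the relevant machinery from \cite{YFG}: joint convergence in distribution of a list of observables to a normal law $N(\mu,\Sigma)$ is tested against the limiting quantum characteristic function, and the q-LAN hypothesis (ii)--(iii) controls the asymptotics of expectations of Weyl-type exponentials $e^{i(\xi\cdot X^{(n)})}e^{\frac12 \L_h^{(n)}}$ through the quasi-characteristic function of $(X^{(n)},\Delta^{(n)})$. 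The key point is that testing $X^{(n)}$ under the shifted state $\r_{\th_0+h/\sqrt n}^{(n)}$ amounts, via the relation $\r_{\th_0+h/\sqrt n}^{(n),a}=e^{\frac12 \L_h^{(n)}}\r_{\th_0}^{(n)}e^{\frac12 \L_h^{(n)}}$, to testing the ``tilted'' functional $\Tr\big[\r_{\th_0}^{(n)}\, e^{\frac12 \L_h^{(n)}} f(X^{(n)}) e^{\frac12 \L_h^{(n)}}\big]$ under the base state; condition (i), $\r_{\th_0+h/\sqrt n}^{(n)}\gg\r_{\th_0}^{(n)}$, is exactly what guarantees the singular part contributes nothing, so the absolutely continuous part $\r_{\th_0+h/\sqrt n}^{(n),a}$ carries the full trace and the substitution is legitimate.

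The key steps, in order, are: (1) fix $\xi\in\R^r$ and write the characteristic function of $X^{(n)}$ under $\r_{\th_0+h/\sqrt n}^{(n)}$ as $\Tr[\r_{\th_0+h/\sqrt n}^{(n)} e^{i\xi^a X_a^{(n)}}]$; use (i) and the Lebesgue decomposition to replace $\r_{\th_0+h/\sqrt n}^{(n)}$ by $e^{\frac12\L_h^{(n)}}\r_{\th_0}^{(n)}e^{\frac12\L_h^{(n)}}$, noting that the singular part is supported off $\supp\r_{\th_0}^{(n)}$ and hence drops out under the trace against an operator that... — more precisely, since the relevant observable and the cyclicity of trace let us write everything as a base-state expectation. (2) Substitute the q-LAN expansion $\L_h^{(n)}=h^i\Delta_i^{(n)}-\frac12(J_{ij}h^ih^j)I^{(n)}+o(\cdot)$ into $e^{\frac12\L_h^{(n)}}$; the scalar term $-\frac14(J_{ij}h^ih^j)$ factors out, and the remainder term is negligible in the sense of the $o(\Delta^{(n)},\r_{\th_0}^{(n)})$ calculus of \cite{YFG}, so that the limit is governed by $\Tr[\r_{\th_0}^{(n)} e^{\frac12 h^i\Delta_i^{(n)}} e^{i\xi^a X_a^{(n)}} e^{\frac12 h^i\Delta_i^{(n)}}]$ up to the scalar $e^{-\frac14 J_{ij}h^ih^j}$. (3) Recognize the resulting expression as (a specialization of) the joint quasi-characteristic function of $(X^{(n)},\Delta^{(n)})$ evaluated at an appropriate complex argument, invoke the joint convergence to $N\big((0,0),\left(\begin{smallmatrix}\Sigma & \tau\\ \tau^* & J\end{smallmatrix}\right)\big)$ from hypothesis (ii), and pass to the Gaussian limit, where the quasi-characteristic function is an explicit exponential-quadratic in the arguments. (4) Carry out the Gaussian moment-generating computation: the cross-covariance $\tau$ produces a shift, and a careful bookkeeping of real versus imaginary parts — the $\Delta$-insertions are split symmetrically as $e^{\frac12 h\Delta}\cdots e^{\frac12 h\Delta}$ rather than $e^{h\Delta}$ — yields precisely the mean $(\Re\tau)h$ while the $-\frac14 J h\cdot h$ scalar cancels the diagonal $J$-contribution, leaving covariance $\Sigma$. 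Concluding: the limiting characteristic function of $X^{(n)}$ under $\r_{\th_0+h/\sqrt n}^{(n)}$ is that of $N((\Re\tau)h,\Sigma)$, which gives the claim.

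The main obstacle I expect is Step (4), the correct extraction of $\Re\tau$: in the noncommutative setting the quasi-characteristic function is not symmetric in its arguments, and the split insertion $e^{\frac12\L_h^{(n)}}(\cdot)e^{\frac12\L_h^{(n)}}$ must be handled so that the Gaussian limit's quadratic form combines the $\tau$ and $\tau^*$ contributions into $\Re\tau$ and not into $\tau$ itself; getting the factor of $\frac12$ and the $i$'s right, and verifying that the imaginary part of $\tau$ contributes only a pure phase that is absorbed (or cancels against the CCR commutator terms), is the delicate bookkeeping. A secondary technical point is justifying rigorously in Step (1)--(2) that the $o(\Delta^{(n)},\r_{\th_0}^{(n)})$ remainder in the log-likelihood expansion, once exponentiated and inserted between observables, still vanishes in the limit of the relevant traces; this should follow from the properties of the $o$-calculus and the uniform boundedness coming from joint convergence, essentially as in \cite{YFG}, but it must be checked that nothing is lost when $\r_{\th_0}^{(n)}$ is allowed to have nontrivial kernel — which is exactly the new feature here, and is precisely where condition (i) together with the quantum Lebesgue decomposition of Section \ref{sec:Lebesgue} does the work.
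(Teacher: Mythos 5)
Your overall strategy coincides with the paper's: test $X^{(n)}$ against products of exponentials, replace the shifted state by the absolutely continuous part $e^{\frac{1}{2}\mathcal{L}_h^{(n)}}\rho_{\theta_0}^{(n)}e^{\frac{1}{2}\mathcal{L}_h^{(n)}}$, substitute the q-LAN expansion of $\mathcal{L}_h^{(n)}$, and read off the quasi-characteristic function of $N(({\rm Re}\,\tau)h,\Sigma)$ from the joint Gaussian limit. Your steps (2)--(4), including the symmetric insertion $e^{\frac{1}{2}h^i\Delta_i}(\cdot)e^{\frac{1}{2}h^i\Delta_i}$ and the cancellation of the scalar $J$-term against the diagonal contribution, are exactly the paper's computation and are resolved as in the proof of Theorem 2.9 of the earlier work.

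The genuine gap is in step (1), and it is precisely the point where this theorem goes beyond its predecessor. You assert that the singular part is ``supported off $\supp\rho_{\theta_0}^{(n)}$ and hence drops out under the trace,'' but this is false for fixed $n$: the test operator $\prod_t e^{\sqrt{-1}\xi_t^i X_i^{(n)}}$ is not supported on $\supp\rho_{\theta_0}^{(n)}$, so the trace of the singular part against it has no reason to vanish, and neither condition (i) nor the Lebesgue decomposition by itself kills it. The correct argument is asymptotic and runs through the trace norm: the singular part is positive and the test operator is unitary for real $\xi_t$, so
\[
\left|\Tr\left[\left(\rho^{(n)}_{\theta_0+h/\sqrt{n}}-e^{\frac{1}{2}\mathcal{L}_h^{(n)}}\rho_{\theta_0}^{(n)}e^{\frac{1}{2}\mathcal{L}_h^{(n)}}\right)\prod_{t}e^{\sqrt{-1}\xi_t^i X_i^{(n)}}\right]\right|
\le 1-\Tr\, e^{\frac{1}{2}\mathcal{L}_h^{(n)}}\rho_{\theta_0}^{(n)}e^{\frac{1}{2}\mathcal{L}_h^{(n)}},
\]
and the right-hand side tends to $0$ because the limit of $\Tr\, e^{\frac{1}{2}\mathcal{L}_h^{(n)}}\rho_{\theta_0}^{(n)}e^{\frac{1}{2}\mathcal{L}_h^{(n)}}$ is obtained from the very quasi-characteristic-function computation of your steps (2)--(4) by setting all $\xi_t=0$, and that limit equals $1$. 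In other words, the negligibility of the singular part is not a structural consequence of absolute continuity or of the support geometry; it is extracted from the q-LAN expansion (iii) itself via the normalization of the limiting Gaussian state. Without this step the substitution in your step (1) is unjustified, so you should make this trace-norm estimate explicit before proceeding.
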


\begin{proof}
Let $(X_{1},\,\dots,\,X_{r},\,\Delta_{1},\,\dots,\,\Delta_{d})$
be the basic canonical observables of the CCR-algebra 
$\CCR{\Im\begin{pmatrix}\Sigma & \tau\\ \tau^{*} & J \end{pmatrix}}$, 
and let $\tilde{\phi}$ be the quantum Gaussian state 
$N\left(\begin{pmatrix}0\\ 0 \end{pmatrix},\begin{pmatrix}\Sigma & \tau\\ \tau^{*} & J \end{pmatrix}\right)$ 
on that algebra. 
For a finite subset $\left\{ \xi_{t}\right\} _{t=1}^{s}$ of $\R^{r}$, we see that
\begin{eqnarray}
&&\lim_{n\rightarrow\infty}\Tr{e^{\frac{1}{2}\L_{h}^{(n)}}\rho_{\theta_{0}}^{(n)}e^{\frac{1}{2}\L_{h}^{(n)}}\left(\prod_{t=1}^{s}e^{\sqrt{-1}\xi_{t}^{i}X_{i}^{(n)}}\right)} \label{eq:bigQGaussian} \\
&&\qquad =
\lim_{n\rightarrow\infty}\Tr{\rho_{\theta_{0}}^{(n)}e^{\frac{1}{2}\L_{h}^{(n)}}\left(\prod_{t=1}^{s}e^{\sqrt{-1}\xi_{t}^{i}X_{i}^{(n)}}\right)e^{\frac{1}{2}\L_{h}^{(n)}}} \nonumber \\
&&\qquad =
\tilde{\phi}\left(e^{\frac{1}{2}\left(h^{i}\Delta_{i}-\frac{1}{2}\left(J_{ij}h^{i}h^{j}\right)\right)}\left(\prod_{t=1}^{s}e^{\sqrt{-1}\xi_{t}^{i}X_{i}}\right)e^{\frac{1}{2}\left(h^{i}\Delta_{i}-\frac{1}{2}\left(J_{ij}h^{i}h^{j}\right)\right)}\right)\nonumber \\
&&\qquad =
 e^{-\frac{1}{2}J_{ij}h^{i}h^{j}}\tilde{\phi}\left(e^{\frac{1}{2}h^{i}\Delta_{i}}\left(\prod_{t=1}^{s}e^{\sqrt{-1}\xi_{t}^{i}X_{i}}\right)e^{\frac{1}{2}h^{i}\Delta_{i}}\right)\nonumber \\
&&\qquad =
\exp\left(\sum_{t=1}^{s}\left(\sqrt{-1}\xi_{t}^{i}h^{j}\left({\rm Re}\,\tau\right)_{ij}-\frac{1}{2}\xi_{t}^{i}\xi_{t}^{j}
\Sigma_{ji}\right)-\sum_{t=1}^{s}\sum_{u=t+1}^{s}\xi_{t}^{i}\xi_{u}^{j}\Sigma_{ji}\right). \nonumber
\end{eqnarray}
The last equality is proven in a similar way to \cite[Theorem 2.9]{YFG}. Thus
\begin{eqnarray}
&&\lim_{n\rightarrow\infty}\left|\Tr{\left(\rho_{\theta_{0}+h/\sqrt{n}}^{(n)}-e^{\frac{1}{2}\L_{h}^{(n)}}\rho_{\theta_{0}}^{(n)}e^{\frac{1}{2}\L_{h}^{(n)}}\right)\left(\prod_{t=1}^{s}e^{\sqrt{-1}\xi_{t}^{i}X_{i}^{(n)}}\right)}\right| \label{eq:L1Norm} \\
&&\qquad\le 
\lim_{n\rightarrow\infty}\left|\Tr{\left(\rho_{\theta_{0}+h/\sqrt{n}}^{(n)}-e^{\frac{1}{2}\L_{h}^{(n)}}\rho_{\theta_{0}}^{(n)}e^{\frac{1}{2}\L_{h}^{(n)}}\right)}\right|  \nonumber\\
&&\qquad =
\lim_{n\rightarrow\infty}\Tr{\left(\rho_{\theta_{0}+h/\sqrt{n}}^{(n)}-e^{\frac{1}{2}\L_{h}^{(n)}}\rho_{\theta_{0}}^{(n)}e^{\frac{1}{2}\L_{h}^{(n)}}\right)}  \nonumber\\
&&\qquad =0. \nonumber
\end{eqnarray}
The last equality follows from the identity
\[
\lim_{n\rightarrow\infty}\Tr{e^{\frac{1}{2}\L_{h}^{(n)}}\rho_{\theta_{0}}^{(n)}e^{\frac{1}{2}\L_{h}^{(n)}}}=1,
\]
which is verified by setting $\xi_{t}=0$ in (\ref{eq:bigQGaussian}). 
By combining (\ref{eq:L1Norm}) with (\ref{eq:bigQGaussian}), we have 
\begin{eqnarray*}
&&\lim_{n\rightarrow\infty}\Tr{\rho_{\theta_{0}+h/\sqrt{n}}^{(n)}\left(\prod_{t=1}^{s}e^{\sqrt{-1}\xi_{t}^{i}X_{i}^{(n)}}\right)} \\
&&\qquad\qquad =
\exp\left(\sum_{t=1}^{s}\left(\sqrt{-1}\xi_{t}^{i}h^{j}\left({\rm Re}\,\tau\right)_{ij}-\frac{1}{2}\xi_{t}^{i}\xi_{t}^{j}
\Sigma_{ji}\right)-\sum_{t=1}^{s}\sum_{u=t+1}^{s}\xi_{t}^{i}\xi_{u}^{j}\Sigma_{ji}\right).
\end{eqnarray*}
Since the right-hand side is the quasi-characteristic function of $N(({\rm Re}\,\tau)h,\,\Sigma)$, we have
\[ X^{(n)}\convd{\;\;\rho_{\theta_0+h/\sqrt{n}}^{(n)}} \;N(({\rm Re}\,\tau) h,\Sigma). \]
This completes the proof.
\end{proof}

We now proceed to the i.i.d case. 
In classical statistics, it is known that the i.i.d.~extension of a model $\{P_{\theta}  \,|\, \theta \in \Theta \subset \R^d \}$ on a measure space $(\O, \F,\m)$ having densities $p_\theta$ with respect to $\mu$ is LAN at $\theta_0$ if the model is {\em differentiable in quadratic mean} at $\theta_0$ \cite[p.~93]{Vaart}, that is, if there are random variables $\ell_1,\dots,\ell_d$ that satisfy
\begin{equation}\label{eq:quadratic}
\int_\O 
\left[
\sqrt{p_{\theta_0+h}} - \sqrt{p_{\theta_0}} - \frac{1}{2} h^i \ell_i \sqrt{p_{\theta_0}}
\right]^2 d\mu = o(\|h\|^2)
\end{equation}
as $ h\rightarrow 0$.
This condition is rewritten as
\begin{equation}\label{eq:quadratic2}
\int_\O p_{\theta_0}
\left[
\sqrt{ \frac{p_{\theta_0+h}^a}{p_{\theta_0}} }-1- \frac{1}{2} h^i \ell_i 
\right]^2 d\mu 
+\int_\O p_{\theta_0+h}^\perp d\mu
= o(\|h\|^2),
\end{equation}
where
\[
 p_{\theta_0+h}^a(\o):=\left\{\array{ll} p_{\theta_0+h}(\o),& \o\in\O_0 \\ 0, & \o\notin \O_0\endarray\right.
\]
and
\[
 p_{\theta_0+h}^\perp(\o):=\left\{\array{ll} 0,& \o\in\O_0 \\ p_{\theta_0+h}(\o), & \o\notin \O_0\endarray\right.
\]
with $\O_0:=\{\o\in\O\,|\, p_{\th_0}(\o)>0\}$. 
The first term in the left-hand side of \eqref{eq:quadratic2} is connected with the differentiability of the likelihood ratio at $h=0$, while the second term with the negligibility of the singular part. 

The quantum counterpart of this characterization is given by the following

\begin{theorem}[q-LAN for i.i.d.~models]\label{thm:iid}
Let $\left\{ \rho_{\theta}\left|\,\theta\in\Theta\subset\R^{d}\right.\right\} $
be a quantum statistical model on a finite dimensional Hilbert space
$\H$ satisfying $\rho_{\theta}\gg\rho_{\theta_{0}}$ for all $\theta\in\Theta$ 
with respect to a fixed $\theta_{0}\in\Theta$. 
If $\L_{h}:=\L\left(\left.\rho_{\theta_{0}+h}\right|\rho_{\theta_{0}}\right)$
is differentiable at $h=0$, 
and the trace of the absolutely continuous part satisfies
\begin{equation}\label{eq:oh2}
\Tr{\rho_{\theta_{0}}e^{\L_{h}}}
=
1-o(\|h\|^{2}),
\end{equation}
then $\left\{ \rho_{\theta}^{\otimes n}\left|\,\theta\in\Theta\subset\R^{d}\right.\right\} $
is q-LAN at $\theta_{0}$; 
that is, 
$\rho_{\theta}^{\otimes n}\gg\rho_{\theta_{0}}^{\otimes n}$ for all $\th\in\Th$, and 
\[
\Delta_{i}^{(n)}:=\frac{1}{\sqrt{n}}\sum_{k=1}^{n}I^{\otimes(k-1)}\otimes L_{i}\otimes I^{\otimes(n-k)},
\]
satisfies (ii) and (iii) in Definition \ref{def:QLAN}. 
Here $L_{i}$ is (a version of) the $i$th symmetric logarithmic derivative at $\theta_{0}\in\Theta$, 
and $J=(J_{ij})$ is given by
\[
 J_{ij}:=\Tr{\rho_{\theta_{0}}L_{j}L_{i}}. 
\]
\end{theorem}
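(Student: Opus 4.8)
The plan is to reduce the i.i.d.\ claim to a single‑copy analysis by pushing both the absolute continuity and the quantum Lebesgue decomposition through tensor products, then to read off the linear term of the q‑LAN expansion from the differentiability of $\L_h$, to pin down the quadratic constant from the trace hypothesis \eqref{eq:oh2}, and finally to discard the operator‑valued remainder by a variance estimate followed by the quantum central limit theorem.

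First I would settle the absolute continuity $\r_\th^{\otimes n}\gg\r_{\th_0}^{\otimes n}$ for every $\th\in\Th$: since $\r_{\th_0}\ll\r_\th$, Lemma \ref{lem:2} gives $R>0$ with $\r_\th\ge R\r_{\th_0}R$, so $\r_\th^{\otimes n}\ge R^{\otimes n}\r_{\th_0}^{\otimes n}R^{\otimes n}$ with $R^{\otimes n}>0$ and Lemma \ref{lem:2} applies again. Next I would show that the quantum Lebesgue decomposition tensorises: if $\s_i=\s_i^a+\s_i^\perp$ is the decomposition of $\s_i$ relative to $\r_i$, then among the four terms of $(\s_1^a+\s_1^\perp)\otimes(\s_2^a+\s_2^\perp)$ the term $\s_1^a\otimes\s_2^a$ has the form $R(\r_1\otimes\r_2)R$ with $R\ge0$ (Lemma \ref{lem:2}(d)), hence is $\ll\r_1\otimes\r_2$, while the three remaining terms are positive with vanishing trace against $\r_1\otimes\r_2$, hence are $\perp\r_1\otimes\r_2$ (Lemma \ref{lem:1}(c)); by the uniqueness Lemma \ref{lem:uniqueness2} this forces $(\s_1\otimes\s_2)^a=\s_1^a\otimes\s_2^a$, and by induction $(\r_{\th_0+h}^{\otimes n})^a=(\r_{\th_0+h}^a)^{\otimes n}$. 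Since the embeddings $\iota_k(A):=I^{\otimes(k-1)}\otimes A\otimes I^{\otimes(n-k)}$ of a single‑copy operator commute, $e^{\frac12\sum_k\iota_k(\L_h)}=(e^{\frac12\L_h})^{\otimes n}$, whence $\sum_{k=1}^n\iota_k(\L_h)$ is a legitimate version of $\L(\r_{\th_0+h}^{\otimes n}\,|\,\r_{\th_0}^{\otimes n})$, and in particular the $n$‑copy log‑likelihood ratio of Definition \ref{def:QLAN}(iii) satisfies $\L_h^{(n)}=\sum_k\iota_k(\L_{h/\sqrt n})$.

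Then I would analyse the single copy. Writing $\L_h=h^iL_i+r(h)$ with $L_i:=\partial_{h^i}\L_h|_{h=0}$ (Hermitian) and $r(h)=o(\|h\|)$ in operator norm, differentiation of $\r_{\th_0+h}^a=e^{\frac12\L_h}\r_{\th_0}e^{\frac12\L_h}$ at $h=0$ (where $\L_0=0$) gives $\partial_i\r_{\th_0+h}^a|_0=\tfrac12(L_i\r_{\th_0}+\r_{\th_0}L_i)$; and because $\r_{\th_0+h}^\perp\ge0$ with $\|\r_{\th_0+h}^\perp\|_1=1-\Tr\r_{\th_0}e^{\L_h}=o(\|h\|^2)$ by \eqref{eq:oh2}, the density $\r_{\th_0+h}$ is differentiable at $0$ with the same derivative, so $L_i$ is a version of the $i$‑th SLD and $\Tr\r_{\th_0}L_i=0$. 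A second‑order expansion then yields $\Tr\r_{\th_0}e^{\L_h}=1+\Tr\r_{\th_0}r(h)+\tfrac12h^ih^j\Tr\r_{\th_0}L_iL_j+o(\|h\|^2)$, and comparison with \eqref{eq:oh2} identifies $\Tr\r_{\th_0}r(h)=-\tfrac12J_{ij}h^ih^j+o(\|h\|^2)$ with $J_{ij}=\Tr\r_{\th_0}L_jL_i$. Substituting $h\mapsto h/\sqrt n$ into $\L_h^{(n)}=\sum_k\iota_k(\L_{h/\sqrt n})=h^i\Delta_i^{(n)}+\sum_k\iota_k(r(h/\sqrt n))$ and splitting $r(h/\sqrt n)=c_nI+\tilde r_n$ with $c_n:=\Tr\r_{\th_0}r(h/\sqrt n)$ and $\Tr\r_{\th_0}\tilde r_n=0$, I get $nc_n\to-\tfrac12J_{ij}h^ih^j$, while $\|\tilde r_n\|\le2\|r(h/\sqrt n)\|=o(1/\sqrt n)$ forces $\Tr\r_{\th_0}^{\otimes n}(\sum_k\iota_k\tilde r_n)^2=n\Tr\r_{\th_0}\tilde r_n^2\le n\|\tilde r_n\|^2=o(1)$ (the off‑diagonal terms vanishing by the product structure and $\Tr\r_{\th_0}\tilde r_n=0$), so $\sum_k\iota_k\tilde r_n\to0$ in $L^2(\r_{\th_0}^{\otimes n})$, hence $=o(\Delta^{(n)},\r_{\th_0}^{(n)})$ by the $o$‑calculus of \cite{YFG}; this gives expansion (iii). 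For (ii), the $L_i$ are zero‑mean observables with $\Tr\r_{\th_0}^{\otimes n}\Delta_j^{(n)}\Delta_i^{(n)}=\Tr\r_{\th_0}L_jL_i=J_{ij}$ for every $n$, so the quantum central limit theorem for i.i.d.\ states gives $\Delta^{(n)}\convd{\r_{\th_0}^{(n)}}N(0,J)$ ($\Re J\ge0$ always, and $\Re J>0$ under the standing regularity hypothesis on the model); condition (i) is the first step.

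The step I expect to be the main obstacle is this last error control. Since $\L_h$ is assumed differentiable only at $h=0$, the quadratic coefficient $-\tfrac12J_{ij}h^ih^j$ cannot be obtained from a Taylor expansion of $\L_h$ itself, and the crude size of $\sum_k\iota_k(r(h/\sqrt n))$ is $n\cdot o(1/\sqrt n)=\sqrt n\cdot o(1)$, which need not even remain bounded; the argument succeeds only because \eqref{eq:oh2} pins the scalar part $nc_n$ down to the exact constant and the genuinely operator‑valued fluctuation $\sum_k\iota_k\tilde r_n$ is annihilated by the variance estimate, which relies on the product structure of $\r_{\th_0}^{\otimes n}$ together with $\Tr\r_{\th_0}\tilde r_n=0$ — precisely the quantum counterpart of the split of the classical condition \eqref{eq:quadratic2} into a likelihood‑differentiability term and a negligible‑singular‑part term. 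A further point needing care is the non‑uniqueness of the version of $\L_h$ on $\ker\r_{\th_0}$; the tensor construction $\L_h^{(n)}=\sum_k\iota_k(\L_h)$ fixes a consistent choice throughout.
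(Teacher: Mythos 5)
Your proposal is correct and follows essentially the same route as the paper: tensorization of the quantum Lebesgue decomposition via Lemmas \ref{lem:1}, \ref{lem:2} and \ref{lem:uniqueness2} to identify $\L_h^{(n)}=\sum_k\iota_k(\L_{h/\sqrt n})$, identification of the SLD from the differentiability of $\L_h$ together with the trace-norm negligibility of the singular part, extraction of the scalar constraint $\Tr\rho_{\theta_0}r(h)+\tfrac12J_{ij}h^ih^j=o(\|h\|^2)$ from \eqref{eq:oh2}, and the quantum CLT plus the infinitesimality lemma of \cite{YFG} for (ii) and (iii). Your splitting of the remainder into a scalar part and a $\rho_{\theta_0}$-traceless part with an explicit $L^2$ variance bound is a slightly more self-contained packaging of the paper's verification that $P(n)\to 0$ and $\sqrt{n}\,\Tr\rho_{\theta_0}P(n)\to 0$ before invoking \cite[Lemma 2.6]{YFG}, but it is not a different argument.
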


\begin{proof}
We first note that for positive operators $A, B, C$, and $D$ satisfying $A\ge B$ and $C\ge D$, we have 
$A\otimes C\ge B\otimes D$: in fact, 
\[
 A\otimes C-B\otimes D=A\otimes (C-D)+(A-B)\otimes D\ge 0.
\]
As a consequence, for operators $\r$ and $\s$ satisfying $\r\ge \s\ge 0$, we have
\[
 \r^{\otimes n}\ge \s^{\otimes n}
\]
for all $n\in\N$. 

Now, let $\r_\th=\r_\th^a+\r_\th^\perp$ be the quantum Lebesgue decomposition with respect to $\r_{\th_0}$. 
It then follows from the above observation that
\begin{equation}\label{eqn:absoluteContinuity-n}
\rho_{\theta}^{\otimes n} 
\ge 
\left(\rho_{\theta}^a\right)^{\otimes n} 
=
\left(
e^{\frac{1}{2}\L\left(\rho_{\theta}\left|\rho_{\theta_{0}}\!\right.\right)}
\rho_{\theta_{0}}
e^{\frac{1}{2}\L\left(\rho_{\theta}\left|\rho_{\theta_{0}}\!\right.\right)}
\right)^{\otimes n}
=
e^{\frac{1}{2}\L^{(n)}}\,\rho_{\theta_{0}}^{\otimes n}\, e^{\frac{1}{2} \L^{(n)}},
\end{equation}
where
\begin{equation}\label{eqn:qllr-n}
 \L^{(n)}
 :=\sum_{k=1}^{n}I^{\otimes(k-1)}\otimes
 \L\left(\rho_{\theta}\left|\rho_{\theta_{0}}\!\right.\right)
 \otimes I^{\otimes(n-k)}.
\end{equation}
We prove that $\L^{(n)}$ is nothing but the quantum log-likelihood ratio 
$\L\left(\rho_{\theta}^{\otimes n}\left|\rho_{\theta_{0}}^{\otimes n}\!\right.\right)$ 
between $\rho_{\theta}^{\otimes n}$ and $\rho_{\theta_0}^{\otimes n}$. 
First of all, we note that
\[
 \Tr \r_{\th_0}\r_\th= \Tr \r_{\th_0}\r_\th^a+\Tr \r_{\th_0}\r_\th^\perp=\Tr \r_{\th_0}\r_\th^a
 =\Tr \rho_{\theta_0} \left(
 e^{\frac{1}{2}\L\left(\rho_{\theta}\left|\rho_{\theta_{0}}\!\right.\right)}
 \rho_{\theta_{0}}
 e^{\frac{1}{2}\L\left(\rho_{\theta}\left|\rho_{\theta_{0}}\!\right.\right)}
 \right),
\]
which follows from Lemma \ref{lem:1}. 
By using this identity, we find that
\begin{eqnarray*}
	&&\Tr \rho_{\theta_0}^{\otimes n} \left[ \rho_{\theta}^{\otimes n} - \left(
	e^{\frac{1}{2}\L\left(\rho_{\theta}\left|\rho_{\theta_{0}}\!\right.\right)}
	\rho_{\theta_{0}}
	e^{\frac{1}{2}\L\left(\rho_{\theta}\left|\rho_{\theta_{0}}\!\right.\right)}
	\right)^{\otimes n} 
	\right]  \\
	&&\qquad=
	(\Tr \rho_{\theta_0} \rho_{\theta})^n - \left
	(\Tr \rho_{\theta_0} \left(
	e^{\frac{1}{2}\L\left(\rho_{\theta}\left|\rho_{\theta_{0}}\!\right.\right)}
	\rho_{\theta_{0}}
	e^{\frac{1}{2}\L\left(\rho_{\theta}\left|\rho_{\theta_{0}}\!\right.\right)}
	\right) \right)^n=0.
\end{eqnarray*}
In view of Lemma \ref{lem:1} again, this implies that
\begin{equation}\label{eqn:singular-n} 
	\rho_{\theta_0}^{\otimes n} \perp \left[ \rho_{\theta}^{\otimes n} - 
	\left(
	e^{\frac{1}{2}\L\left(\rho_{\theta}\left|\rho_{\theta_{0}}\!\right.\right)}
	\rho_{\theta_{0}}
	e^{\frac{1}{2}\L\left(\rho_{\theta}\left|\rho_{\theta_{0}}\!\right.\right)}
	\right)^{\otimes n} \right].
\end{equation}
From \eqref{eqn:absoluteContinuity-n} and \eqref{eqn:singular-n}, we have the quantum Lebesgue decomposition:
\[
 \r_\th^{\otimes n}=(\r_\th^{\otimes n})^a+(\r_\th^{\otimes n})^\perp
\] 
with respect to $\r_{\th_0}^{\otimes n}$, where
\[
 (\r_\th^{\otimes n})^a
 =
 e^{\frac{1}{2}\L^{(n)}}\,\rho_{\theta_{0}}^{\otimes n}\, e^{\frac{1}{2} \L^{(n)}}
\]
and
\[
 (\r_\th^{\otimes n})^\perp
 =\rho_{\theta}^{\otimes n} - 
 e^{\frac{1}{2}\L^{(n)}}\,\rho_{\theta_{0}}^{\otimes n}\, e^{\frac{1}{2} \L^{(n)}}.
\]
It is now obvious that the quantum log-likelihood ratio is given by
\begin{equation}\label{eq:iidRatio}
\L\left(\rho_{\theta}^{\otimes n} \left| \rho_{\theta_{0}}^{\otimes n}\right.\right)
=
\L\left((\rho_{\theta}^{\otimes n})^a \left| \rho_{\theta_{0}}^{\otimes n}\!\right.\right)
=\L^{(n)}.
\end{equation}

Before proceeding to the proof of (ii) and (iii) in Definition \ref{def:QLAN},
we give some preliminary consideration. 
Since $\L_{h}$ is differentiable at $h=0$, 
there are Hermitian operators $A_{i}$ ($1\leq i\leq d$) that satisfy
\begin{equation}\label{eqn:introduceSLD}
\L_{h}=h^{i}A_{i}+o(\|h\|).
\end{equation}
It is import to observe that the operator $A_i$ is (a version of) the symmetric logarithmic derivative (SLD) of the model at $\th=\th_0$ in the $i$th direction.
In fact, since the singular part 
\[
 \r_{\th_0+h}^\perp:=\r_{\th_0+h}-e^{\frac{1}{2}\L_{h}}\rho_{\th_{0}}e^{\frac{1}{2}\L_{h}}
\]
is positive, the condition (\ref{eq:oh2}) entails that
\[
\rho_{\theta_{0}+h}=e^{\frac{1}{2}\L_{h}}\rho_{\theta_{0}}e^{\frac{1}{2}\L_{h}}+o(\|h\|^2). 
\]
Substituting \eqref{eqn:introduceSLD} into this equation, we have
\begin{eqnarray*}
\rho_{\theta_{0}+h} 
&=&
\exp\left[\frac{1}{2}\left(h^{i}A_{i}+o(\|h\|)\right)\right]\rho_{\theta_{0}}\exp\left[\frac{1}{2}\left(h^{i}A_{i}+o(\|h\|)\right)\right]+o(\|h\|^{2})\\
&=&
\rho_{\theta_{0}}+\frac{1}{2}h^{i}(\rho_{\theta_{0}}A_{i}+A_{i}\rho_{\theta_{0}})+o(\|h\|),
\end{eqnarray*}
so that
\[
 \left.\frac{\partial \rho_{\theta_{0}+h}}{\partial h^i}\right|_{h=0}
 =\frac{1}{2}\left(\rho_{\theta_{0}}A_{i}+A_{i}\rho_{\theta_{0}}\right).
\]
This proves that $A_i$ is the SLD in the $i$th direction at $h=0$.
In particular, we have $\Tr{\rho_{\theta_{0}}A_{i}}=0$ for all $i$.

We next evaluate the remainder term $B(h):=\L_{h}-h^{i}A_{i}$ in \eqref{eqn:introduceSLD} in more detail. 
Observe that
\begin{eqnarray*}
\Tr{\rho_{\theta_{0}}e^{\L_{h}}}
&=&\Tr{\rho_{\theta_{0}}\exp\left(h^{i}A_{i}+B(h)\right)}\\
&=&\Tr{\rho_{\theta_{0}}\left(I+\left(h^{i}A_{i}+B(h) \right)+\frac{1}{2}\left(h^{i}A_{i}+B(h) \right)^{2}\right)}
+o(\|h\|^2)\\
&=&1+h^{i}\left(\Tr{\rho_{\theta_{0}}A_{i}}\right)+ \Tr{\rho_{\theta_{0}}B(h)} 
+\frac{1}{2}h^{i}h^{j}\Tr{\rho_{\theta_{0}}A_{i}A_{j}}+o(\|h\|^2)\\
&=&1+  \Tr{\rho_{\theta_{0}}B(h)}  +\frac{1}{2} h^{i}h^{j} J_{ji}
+o(\|h\|^2),
\end{eqnarray*}
where $J_{ji}:=\Tr{\rho_{\theta_{0}}A_{i}A_{j}}$. 
Because of  the assumption \eqref{eq:oh2},
the above equation leads to
\begin{equation}\label{eq:ABcond}
\Tr{\rho_{\theta_{0}}B(h)}  +\frac{1}{2} h^{i}h^{j}J_{ij}=o(\|h\|^2).  
\end{equation}

Now we are ready to prove (ii) and (iii).
Let 
\[
\Delta_{i}^{(n)}:=\frac{1}{\sqrt{n}}\sum_{k=1}^{n}I^{\otimes(k-1)}\otimes A_{i}\otimes I^{\otimes(n-k)}.
\]
It then follows from the quantum central limit theorem \cite{qclt} that 
\begin{equation}\label{eqn:qCLT}
 \Delta^{(n)}\convd{\;\;\rho_{\theta_0}^{\otimes n}} N(0,J).
\end{equation}
This proves (ii). 
On the other hand, we see from \eqref{eq:iidRatio} and \eqref{eqn:qllr-n} that
\[
\L_{h}^{(n)}
=\L\left.\left(\r_{\th_0+h/\sqrt{n}}^{\otimes n} \right| \r_{\th_0}^{\otimes n}\right)
=\sum_{k=1}^{n}I^{\otimes(k-1)}\otimes\L_{h/\sqrt{n}}\otimes I^{\otimes(n-k)}.
\]
Let us prove that
\[
R_{h}^{(n)}:=\L_{h}^{(n)}-\left(h^{i}\Delta_{i}^{(n)}-\frac{1}{2}\left(J_{ij}h^{i}h^{j}\right)I^{\otimes n}\right)
\]
is infinitesimal relative to the convergence \eqref{eqn:qCLT}.
It is rewritten as
\begin{eqnarray*}
	R_{h}^{(n)} & = & \sum_{k=1}^{n}I^{\otimes(k-1)}\otimes
	\left[
		\L_{h/\sqrt{n}}-\frac{1}{\sqrt{n}}h^{i}A_{i}+\frac{1}{2n}\left(J_{ij}h^{i}h^{j}\right)I
	\right]\otimes I^{\otimes(n-k)}\\
	& = & \sum_{k=1}^{n}I^{\otimes(k-1)}\otimes
	\left[
		B\left(\frac{h}{\sqrt{n}} \right)+\frac{1}{2n}\left(J_{ij}h^{i}h^{j}\right)I
	\right]\otimes I^{\otimes(n-k)}\\
	& = & \sum_{k=1}^{n}I^{\otimes (k-1)}\otimes\frac{1}{\sqrt{n}}P(n)\otimes I^{\otimes (n-k)},
\end{eqnarray*}
where
\[
P(n):=\sqrt{n}\left(
	B\left(\frac{h}{\sqrt{n}} \right)+\frac{1}{2n}\left(J_{ij}h^{i}h^{j}\right)I
\right).
\]
Note that $\lim_{n\rightarrow\infty}P(n)=0$, and that 
\begin{eqnarray*}
	\lim_{n\rightarrow\infty}\sqrt{n}\,\Tr\rho_{\theta_{0}}P(n) 
	=  \lim_{n\rightarrow\infty} \frac{\Tr \rho_{\theta_0} B(h/\sqrt{n})+\frac{1}{2n} J_{ij}h^i h^j}{(1/\sqrt{n})^2} 
	= 0
\end{eqnarray*}
because of (\ref{eq:ABcond}). 
It then follows from \cite[Lemma 2.6]{YFG}
that $R_{h}^{(n)}=o(\Delta^{(n)},\rho_{\theta_{0}}^{\otimes n})$
for any $h\in\R^{d}$. This completes the proof.
\end{proof}

The following corollary is an i.i.d.~version of the quantum Le Cam third lemma. 

\begin{corollary}[quantum Le Cam third lemma for i.i.d.~models]\label{cor:iid}
Let $\left\{ \rho_{\theta}\left|\,\theta\in\Theta\subset\R^{d}\right.\right\} $
be a quantum statistical model on a finite dimensional Hilbert space
$\H$ satisfying $\rho_{\theta}\gg\rho_{\theta_{0}}$ for all $\theta\in\Theta$
with respect to a fixed $\theta_{0}\in\Theta$. 
Further, let $\{B_i\}_{1\le i\le r}$ be observables on $\H$ satisfying $\Tr \r_{\theta_0} B_i=0$ for $i=1,\dots,r$. 
If $\L_{h}:=\L\left(\left.\rho_{\theta_{0}+h}\right|\rho_{\theta_{0}}\right)$ 
is differentiable at $h=0$, 
and the trace of the absolutely continuous part satisfies
\[
\Tr{\rho_{\theta_{0}}e^{\L_{h}}}=1-o(\|h\|^{2}),
\]
then the pair $\left(\left\{\r_\th^{\otimes n}\right\}, X^{(n)}\right)$ of i.i.d.~extension model $\left\{\r_\th^{\otimes n}\right\}$ and the list $X^{(n)}=\{X_i^{(n)}\}_{1\le i\le r}$ of observables defined by
\[
 X_i^{(n)}:=\frac{1}{\sqrt{n}}\sum_{k=1}^n I^{\otimes(k-1)}\otimes B_i\otimes I^{\otimes (n-k)}
\]
is jointly q-LAN at $\theta_{0}$, and 
\[
X^{(n)}
\convd{\rho_{\theta_{0}+h/\sqrt{n}}^{\otimes n}}
N(({\rm Re}\,\tau)h,\,\Sigma)
\]
for $h\in\R^{d}$, 
where $\Sigma$ is the $r\times r$ positive semidefinite matrix defined by $\Sigma_{ij}=\Tr\r_{\theta_0} B_jB_i$ 
and $\tau$ is the $r\times d$ matrix defined by $\tau_{ij}=\Tr\r_{\theta_0}L_j B_i$ 
with $L_{i}$ being (a version of) the $i$th symmetric logarithmic derivative at $\theta_{0}\in\Theta$.
\end{corollary}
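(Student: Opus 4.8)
The plan is to obtain the corollary by chaining together three results already in place: Theorem~\ref{thm:iid}, which supplies ordinary q-LAN of the i.i.d.\ extension; the quantum central limit theorem, which upgrades this to \emph{joint} q-LAN once the list $X^{(n)}$ is adjoined; and Theorem~\ref{thm:LeCam}, which then delivers the asserted weak convergence under $\rho_{\theta_0+h/\sqrt n}^{\otimes n}$. First I would apply Theorem~\ref{thm:iid} verbatim: its hypotheses coincide with those of the corollary, so it yields $\rho_\theta^{\otimes n}\gg\rho_{\theta_0}^{\otimes n}$ (condition~(i) of Definition~\ref{def:jointQLAN}), the averaged-SLD list $\Delta_i^{(n)}=n^{-1/2}\sum_{k}I^{\otimes(k-1)}\otimes L_i\otimes I^{\otimes(n-k)}$ with $\Delta^{(n)}\convd{\rho_{\theta_0}^{\otimes n}}N(0,J)$, $\Re J>0$ and $J_{ij}=\Tr\rho_{\theta_0}L_jL_i$, and the expansion $\L_h^{(n)}=h^i\Delta_i^{(n)}-\tfrac12(J_{ij}h^ih^j)I^{\otimes n}+R_h^{(n)}$ with $R_h^{(n)}=o(\Delta^{(n)},\rho_{\theta_0}^{\otimes n})$. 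I would also record, from that proof, the two facts to be reused below: $\Tr\rho_{\theta_0}L_i=0$, and $R_h^{(n)}$ has the i.i.d.-sum form $\sum_k I^{\otimes(k-1)}\otimes n^{-1/2}P(n)\otimes I^{\otimes(n-k)}$ with $P(n)\to0$ and $\sqrt n\,\Tr\rho_{\theta_0}P(n)\to0$.

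Next I would verify conditions~(ii) and~(iii) of Definition~\ref{def:jointQLAN} for the pair $(\{\rho_\theta^{\otimes n}\},X^{(n)})$. For~(ii), observe that both $X_i^{(n)}$ and $\Delta_i^{(n)}$ are $n^{-1/2}$-normalized sums of i.i.d.\ copies of the observables $B_i$ and $L_i$, each centered with respect to $\rho_{\theta_0}$ (by hypothesis $\Tr\rho_{\theta_0}B_i=0$, and $\Tr\rho_{\theta_0}L_i=0$ as just recorded). The quantum central limit theorem~\cite{qclt}, applied to the concatenated list $(X_1^{(n)},\dots,X_r^{(n)},\Delta_1^{(n)},\dots,\Delta_d^{(n)})$, then gives joint convergence under $\rho_{\theta_0}^{\otimes n}$ to the quantum Gaussian state with mean $0$ and covariance $\begin{pmatrix}\Sigma&\tau\\\tau^*&J\end{pmatrix}$, whose blocks are read off from single-site traces as $\Sigma_{ij}=\Tr\rho_{\theta_0}B_jB_i$, $\tau_{ij}=\Tr\rho_{\theta_0}L_jB_i$ and $J_{ij}=\Tr\rho_{\theta_0}L_jL_i$, exactly as in the statement. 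For~(iii), the expansion of $\L_h^{(n)}$ is the same one produced by Theorem~\ref{thm:iid}, so it only remains to check that $R_h^{(n)}$ is negligible relative to the \emph{enlarged} reference list, i.e.\ $R_h^{(n)}=o\!\left(\begin{pmatrix}X^{(n)}\\\Delta^{(n)}\end{pmatrix},\rho_{\theta_0}^{\otimes n}\right)$; but $R_h^{(n)}$ has the i.i.d.-sum form with $P(n)\to0$ and $\sqrt n\,\Tr\rho_{\theta_0}P(n)\to0$, which is precisely the hypothesis of \cite[Lemma~2.6]{YFG}, whose conclusion is indifferent to which jointly-convergent i.i.d.-sum list plays the role of reference.

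With joint q-LAN of $(\{\rho_\theta^{\otimes n}\},X^{(n)})$ established, I would finish by invoking Theorem~\ref{thm:LeCam} with $\S^{(n)}=\{\rho_\theta^{\otimes n}\}$ and the given $X^{(n)}$, which at once yields $X^{(n)}\convd{\rho_{\theta_0+h/\sqrt n}^{\otimes n}}N((\Re\tau)h,\Sigma)$ for all $h\in\R^d$. I expect the only genuinely delicate point to be the passage in~(iii) from $o(\Delta^{(n)},\cdot)$ to $o\!\left(\begin{pmatrix}X^{(n)}\\\Delta^{(n)}\end{pmatrix},\cdot\right)$ — one must be sure that enlarging the reference list cannot destroy negligibility — together with keeping the ordering and conjugation conventions of the covariance matrix (which trace supplies which block, and the placement of $\tau$ versus $\tau^*$) consistent between the quantum CLT and Definition~\ref{def:jointQLAN}; neither is a deep obstacle, but both are easy to mishandle.
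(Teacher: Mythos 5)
Your proposal is correct and follows essentially the same route as the paper's proof: invoke Theorem~\ref{thm:iid} for the expansion and the $\Delta^{(n)}$, apply the quantum central limit theorem to the concatenated list $(X^{(n)},\Delta^{(n)})$ to get condition (ii) with the stated covariance blocks, use \cite[Lemma 2.6]{YFG} to show $R_h^{(n)}$ remains infinitesimal relative to the enlarged convergence, and finish with Theorem~\ref{thm:LeCam}. The ``delicate point'' you flag about enlarging the reference list is handled in the paper exactly as you propose, by a direct appeal to \cite[Lemma 2.6]{YFG}.
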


\begin{proof}
That $\rho_{\theta}^{\otimes n}\gg\rho_{\theta_{0}}^{\otimes n}$
has been proven in Theorem \ref{thm:iid}. 
Let $\Delta_{1}^{(n)},\dots,\Delta_{d}^{(n)}$ be as in the proof of Theorem
\ref{thm:iid}. It then follows from the quantum central limit
theorem that 
\begin{equation}\label{eq:qcovergenceTogetherIID}
\begin{pmatrix}X^{(n)}\\
\Delta^{(n)}
\end{pmatrix}
\convd{\;\; \rho_{\theta_{0}}^{\otimes n}}
N\left(\begin{pmatrix}0\\
0
\end{pmatrix},\begin{pmatrix}\Sigma & \tau\\
\tau* & J
\end{pmatrix}\right).
\end{equation}
Further, because of \cite[Lemma 2.6]{YFG},
the sequence $R_{h}^{(n)}$
of observables given in the proof of Theorem \ref{thm:iid} is
also infinitesimal relative to the convergence (\ref{eq:qcovergenceTogetherIID}).
Now that $\left(\left\{\r_\th^{\otimes n}\right\}, X^{(n)}\right)$ is jointly QLAN at $\theta_{0}$,
the convergence 
\[
X^{(n)}
\convd{\rho_{\theta_{0}+h/\sqrt{n}}^{\otimes n}}
N(({\rm Re}\,\tau)h,\,\Sigma)
\]
is an immediate consequence of Theorem \ref{thm:LeCam}. 
\end{proof}

We conclude this section with some remarks. 
First, 
the technical assumption $\rho_{\theta}\gg\rho_{\theta_{0}}$
that ensures the existence of the quantum log-likelihood ratio 
$\L(\r_{\th} |\r_{\th_{0}})$
in Theorem \ref{thm:iid} or Corollary \ref{cor:iid} 
is inessential.
In fact, 
every state $\r_\th$ that is sufficiently close to $\r_{\th_0}$ satisfies $\r_\th\gg \r_{\th_0}$,
as the following lemma asserts. 

\begin{lemma}\label{lem: neighborhood}
Given a state $\r\in\S(\H)$, let $\e$ be the minimum positive eigenvalue of $\r$, 
and let $U_\e(\r):=\{\s\in\S(\H)\,|\,\|\s-\r\|<\e\}$. 
Then every state $\s\in U_\e(\r)$ satisfies $\s \gg \r$. 
\end{lemma}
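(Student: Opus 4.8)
The plan is to use the characterization of absolute continuity from Lemma~\ref{lem:2}, namely that $\s\gg\r$ is equivalent to $\s\ge R\r R$ for some $R>0$, or equivalently to $\sigma\!\!\downharpoonleft_{\supp\rho}>0$. So it suffices to show that for every $\s\in U_\e(\r)$, the excision $\sigma\!\!\downharpoonleft_{\supp\rho}$ is strictly positive. First I would fix a simultaneous block representation in which $\r=\begin{pmatrix}\r_0 & 0\\ 0 & 0\end{pmatrix}$ with $\r_0>0$ on $\supp\rho$, and write $\s=\begin{pmatrix}\s_0 & \a\\ \a^* & \b\end{pmatrix}$ accordingly; then $\sigma\!\!\downharpoonleft_{\supp\rho}=\s_0$.

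The key estimate is that the norm distance controls how far $\s_0$ can fall below $\r_0$. Writing $P$ for the orthogonal projection onto $\supp\rho$, we have $\s_0 = P\s P\!\!\downharpoonleft_{\supp\rho}$ and $\r_0 = P\r P\!\!\downharpoonleft_{\supp\rho}$, and since compression is norm-nonincreasing, $\|\s_0-\r_0\|\le\|\s-\r\|<\e$. Now $\r_0\ge\e I$ on $\supp\rho$, because $\e$ is the minimum positive eigenvalue of $\r$ and $\r_0$ is exactly the restriction of $\r$ to the span of its positive eigenvectors. Therefore
\[
\s_0 = \r_0 + (\s_0-\r_0) \ge \e I - \|\s_0-\r_0\| I = (\e - \|\s_0-\r_0\|) I > 0,
\]
which gives $\sigma\!\!\downharpoonleft_{\supp\rho}=\s_0>0$, hence $\s\gg\r$ by Lemma~\ref{lem:2}.

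The main point requiring care — though it is not really an obstacle — is the two facts that compression to a subspace does not increase operator norm and that $\r_0\ge\e I$ on $\supp\rho$; both are elementary, the first because $P$ is a contraction and the second by the spectral decomposition of $\r$. I should also note at the outset that $\s\ne 0$ whenever $\s\in U_\e(\r)$ (since $\|\s-\r\|<\e\le\|\r\|$ forces $\s\ne 0$), so that $\s$ is a legitimate nonzero positive operator and Lemma~\ref{lem:2} applies. No quantitative sharpness of $\e$ is claimed, so the proof is just this one-line estimate dressed in the block-matrix language already set up in Section~\ref{sec:acANDsing}.
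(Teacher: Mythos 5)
Your proof is correct and follows essentially the same route as the paper: both arguments reduce to compressing $\s-\r$ onto $\supp\r$ and using that $\r\!\!\downharpoonleft_{\supp\r}\ge\e I$ because $\e$ is the minimum positive eigenvalue, the only cosmetic difference being that you phrase the bound via $\|\s_0-\r_0\|\le\|\s-\r\|<\e$ while the paper writes the equivalent operator inequality $-\e I<\s-\r<\e I$ and compresses it. Your preliminary remarks (invoking Lemma~\ref{lem:2} rather than just the definition of $\ll$, and checking $\s\neq 0$, which is automatic since $\s$ is a state) are harmless but unnecessary.
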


\begin{proof}
For any $\s\in U_\e(\r)$, it holds that $-\e I< \s-\r <\e I$. 
Thus
\[
\s\!\!\downharpoonleft_{\supp\r}
=\r\!\!\downharpoonleft_{\supp\r}+(\s-\r)\!\!\downharpoonleft_{\supp\r}
>\r\!\!\downharpoonleft_{\supp\r}-\e I\!\!\downharpoonleft_{\supp\r}
\geq 0,
\]
proving that $\s \gg \r$.
\end{proof}

Second, 
for a quantum statistical model that fulfills the assumptions in Theorem \ref{thm:iid}, 
it is shown that the Holevo bound is asymptotically achievable at $\th_0$. 
In fact, let us take the operators $\{B_i\}_{1\le i\le r}$ in Corollary \ref{cor:iid} to be a basis of the minimal 
$\D$-invariant extension of the SLD tangent space 
at $\th_0$, where $\D$ is the commutation operator \cite{Holevo:1982}. 
Then the Holevo bound for the original model $\{\r_\th\}_\th$ at $\th=\th_0$ coincides with that for the corresponding quantum Gaussian shift model $N(({\rm Re}\tau) h,\Sigma)$ at $h=0$, and hence at any $h$.  
This fact, combined with the conclusion of Corollary \ref{cor:iid}: 
\[ 
X^{(n)}
\convd{\rho_{\theta_0+h/\sqrt{n}}^{\otimes n}}  
N(({\rm Re}\tau) h,\Sigma),
\]
enables us to construct a sequence of observables that asymptotically achieve the Holevo bound.
For a concrete construction, see the proof of \cite[Theorem 3.1]{YFG}.

\section{Example}\label{sec:example}

Let us begin by investigating the following two-dimensional spin-1/2 pure state model: 
\[
\overline{\r}_\th
:=e^{\frac{1}{2}(\th^1\s_1+\th^2\s_2-\psi(\th))}
\begin{pmatrix} 1 & 0\\ 0 & 0\end{pmatrix}
e^{\frac{1}{2}(\th^1\s_1+\th^2\s_2-\psi(\th))},
\qquad (\th=(\th^1,\th^2)\in\R^2)
\]
where 
\[
 \s_1=\begin{pmatrix} 0 & 1\\ 1 & 0\end{pmatrix},\qquad
 \s_2=\begin{pmatrix} 0 & -\sqrt{-1}\\ \sqrt{-1} & 0\end{pmatrix}
\]
are the Pauli matrices, and $\psi(\th):=\log\cosh\|\th\|$. 
This model is treated within the scope of our previous paper \cite{YFG}.  
In fact, $\overline{\r}_\th\sim \overline{\r}_0$ for all $\theta$, 
and (a version of) the quantum log-likelihood ratio $\overline{\L}_\th:=\L(\overline{\r}_\th | \overline{\r}_0)$ is given by
\[\overline{\L}_\th=\th^1\s_1+\th^2\s_2-\psi(\th). \]
Letting $h=(h^1,h^2):=(\th^1,\th^2)$, the quantum log-likelihood ratio $\overline{\L}_h$ is expanded in $h$ as
\[
\overline{\L}_h=A_ih^i+o(\|h\|),
\]
where $A_i:=\s_i$ is (a version of) the SLD of the model $\overline{\r}_\th$ at $\th=0$.
Let $X^{(n)}=(X_1^{(n)},X_2^{(n)})$ be defined by
\begin{equation}\label{eq:pureLeCamX} 
 X_i^{(n)}:=\frac{1}{\sqrt{n}}\sum_{k=1}^n I^{\otimes(k-1)}\otimes A_i\otimes I^{\otimes (n-k)}.
\end{equation}
Then it is shown that $(\{\overline{\r}_\th^{\otimes n}\}, X^{(n)})$ is jointly q-LAN at $\th=0$, and 
\begin{equation}\label{eq:pureLeCam3} 
X^{(n)}\convd{\;\;\overline{\r}_{h/\sqrt{n}}^{\otimes n}} \;
N(h,J),
\end{equation}
where
\[
J=[\Tr \overline{\r}_0 A_j A_i]_{ij}=
\begin{pmatrix}
1 & -\sqrt{-1}\\
\sqrt{-1} & 1
\end{pmatrix}.
\]
For details, see \cite[Example 3.3]{YFG}.  

Now, let us consider a perturbed model: 
\[
 \r_\th:=e^{-f(\th)} \overline{\r}_\th+(1-e^{-f(\th)})\overline{\r}_*, \qquad (\th\in\R^2)
\]
where 
\[
\overline{\r}_*=\begin{pmatrix} 0 & 0\\ 0 & 1\end{pmatrix},
\]
and $f(\th)$ is a smooth function that is positive for all $\th\neq 0$ and exhibits 
\[ f(\th)=o(\|\th\|^2). \] 
Geometrically, this model is tangential to the Bloch sphere at the north pole $\r_0\,(=\overline{\r}_0)$, 
thus having a singularity at $\th=0$ in that the rank of the model drops there. 
It was customary to avoid such a singular model in the conventional quantum state estimation theory.
We shall demonstrate that this model can be treated within the framework of the present paper. 

Since
\[
 \r_\th
 \ge e^{-f(\th)}\overline{\r}_\th
 =e^{\frac{1}{2}(\overline{\L}_\th-f(\th)I)}\, \r_0\, e^{\frac{1}{2}(\overline{\L}_\th-f(\th)I)},
\]
we see from Lemma \ref{lem:2} that $\r_\th\gg \r_0$ for all $\th$. 
It is also easily seen that the quantum Lebesgue decomposition 
\[
 \r_\th=\r_\th^a+\r_\th^\perp
\]
with respect to $\r_0$ is given by
\[
 \r_\th^a:=e^{-f(\th)}\overline{\r}_\th,\qquad \r_\th^\perp:=(1-e^{-f(\th)})\overline{\r}_*. 
\]
The (mutually) absolutely continuous part $\r_\th^a$ gives (a version of) the quantum log-likelihood ratio $\L_\th:=\L(\r_\th |\r_0):=\L(\r_\th^a |\r_0)$ as
\[
 \L_\th=\overline{\L}_\th-f(\th)I.
\]
Since $f(h)=o(\|h\|^2)$, we have
\[
 \L_h=\overline{\L}_h-f(h)I=A_ih^i+o(\|h\|),
\]
where $A_i:=\s_i$ is again (a version of) the SLD of the model $\r_\th$ at $\th=0$.
On the other hand, the singular part $\r_\th^\perp$ tells us that
\begin{eqnarray*}
\Tr \r_h^\perp = o(\| h \|^2). 
\end{eqnarray*}
This ensures the condition \eqref{eq:oh2} for the model $\r_\th$ to be q-LAN at $\th=0$ (Theorem \ref{thm:iid}). 
It then follows from Corollary \ref{cor:iid} that the sequence $X^{(n)}$ of observables defined by \eqref{eq:pureLeCamX} exhibits
\begin{equation}\label{eq:pureLeCam3_2}
X^{(n)}\convd{\;\;\r_{h/\sqrt{n}}^{\otimes n}} \;
N(h,J).
\end{equation}

In summary, as far as the observables $X^{(n)}=(X^{(n)}_1,X^{(n)}_2)$ are concerned,
the i.i.d.~extension 
$\left\{\r^{\otimes n}_{h/\sqrt{n}} \left|\, h\in\R^2\right.\right\}$
of the perturbed model $\r_\th$ around the singular point $\th=0$ 
is asymptotically similar to the quantum Gaussian shift model $\{ N(h,J) \left|\, h\in\R^2 \right.\}$ 
as shown in \eqref{eq:pureLeCam3_2}, 
and is also asymptotically similar to the i.i.d.~extension 
$\left\{\overline{\r}^{\otimes n}_{h/\sqrt{n}} \left|\, h\in\R^2\right.\right\}$
of the unperturbed pure state model $\overline{\r}_\th$ around $\th=0$ as shown in \eqref{eq:pureLeCam3}.

\section{Concluding remarks}\label{sec:conclusion}

We have developed a theory of local asymptotic normality in the quantum domain based on a noncommutative extension of the Lebesgue decomposition. 
This formulation is applicable to models that do not necessarily comprise mutually absolutely continuous density operators, thus allowing singularity at the reference state. 
In this respect, the present paper gives a substantial generalization of  our previous paper \cite{YFG}. 

However, there are still many open problems left. 
Among others, it is not clear whether every sequence of positive operator-valued measures on a q-LAN model can be realized on the limiting quantum Gaussian shift model. 
In classical statistics, this question has been solved affirmatively as the representation theorem \cite{Vaart}, 
which asserts that, given a weakly convergent sequence $T^{(n)}$ of statistics on 
$\left\{p^{(n)}_{\th_0+h/\sqrt{n}} \left|\, h\in\R^d \right.\right\}$,
there exist a limiting statistics $T$ on the Gaussian shift model $\left\{ N(h,J^{-1}) \left|\, h\in\R^d \right.\right\}$
such that $T^{(n)} \convd{h} T$. 
Representation theorem is particularly useful in proving
the non-existence of an estimator that can asymptotically do better than what can be achieved in the limiting Gaussian shift model. 
Extending the representation theorem
to the quantum domain is an important open problem. 

\section*{Acknowledgment}
 
The present study was supported by JSPS KAKENHI Grant Number JP22340019.

\appendix
\section*{Appendix: Terms and notations}

Given a $d\times d$ real skew-symmetric matrix $S=[S_{ij}]$, let $\CCR{S}$ denote the algebra generated by the observables $X=(X_1,\dots,X_d)$ that satisfy the following canonical commutation relations (CCR): 
\[ \frac{\sqrt{-1}}{2}[X_i,X_j] = S_{ij} \qquad(1\leq i,j\leq d). \]
A state $\phi$ on CCR($S$) is called a {\em quantum Gaussian state}, denoted by $\phi\sim N(h,J)$, if the characteristic function 
${\cal F}_{\xi}\{\phi\}:=\phi(e^{\sqrt{-1}\xi^{i}X_{i}})$ takes the form
\[ {\cal F}_{\xi}\{\phi\}=e^{\sqrt{-1}\xi^{i}h_{i}-\frac{1}{2}\xi^{i}\xi^{j}V_{ij}} \]
where $\x=(\x^i)_{i=1}^d\in\R^d$, $h=(h_i)_{i=1}^d\in\R^d$, and $V=[V_{ij}]$ is a real symmetric matrix such that the Hermitian matrix $J:=V+\sqrt{-1}S$ is positive semidefinite. 
When the canonical observables $X$ need to be specified, we also use the notation $(X,\phi)\sim N(h,J)$. 

When we discuss relationships between a quantum Gaussian state $\phi$ on a CCR and a state on another algebra, we need to use the {\em quasi-characteristic function}
\begin{eqnarray*}
\phi\left(\prod_{t=1}^{r} e^{\sqrt{-1}\xi_{t}^{i}X_{i}}\right)
=
\exp\left(\sum_{t=1}^{r}\left(\sqrt{-1}\xi_{t}^{i}h_{i}-\frac{1}{2}\xi_{t}^{i}\xi_{t}^{j}J_{ji}\right)-\sum_{t=1}^{r}\sum_{u=t+1}^{r}\xi_{t}^{i}\xi_{u}^{j}J_{ji}\right)
\end{eqnarray*}
of a quantum Gaussian state, where $(X,\phi)\sim N(h,J)$ and $\{\x_t\}_{t=1}^r$ is a finite subset of $\C^d$ \cite{qclt}. 

Given a sequence $\H^{(n)}$ of finite dimensional Hilbert spaces indexed by $n\in\N$, let
$X^{(n)}=(X_{1}^{(n)},\,\dots,\,X_{d}^{(n)})$ and $\rho^{(n)}$ be a list of observables and a density operator on each $\H^{(n)}$.
We say the sequence $\left(X^{(n)},\rho^{(n)}\right)$ {\em converges in law to a quantum Gaussian state} $N(h,J)$, in symbols
\[ 
(X^{(n)},\rho^{(n)})\convq qN(h,J)
\qquad \mbox{or simply}\qquad 
X^{(n)}\convd{\;\;\rho^{(n)}} \;N(h,J),
\]
if
\[
\lim_{n\rightarrow\infty}\Tr\rho^{(n)}\left(\prod_{t=1}^{r} e^{\sqrt{-1}\xi_{t}^{i}X_{i}^{(n)}}\right)=\phi\left(\prod_{t=1}^{r} e^{\sqrt{-1}\xi_{t}^{i}X_{i}}\right)
\]
for any finite subset $\{\xi_{t}\}_{t=1}^{r}$ of $\C^{d}$, where $(X,\phi)\sim N(h,J)$. 

Given a list $X^{(n)}=(X_{1}^{(n)},\,\dots,\,X_{d}^{(n)})$ of observables and a state $\rho^{(n)}$ on each $\H^{(n)}$ that satisfy 
$X^{(n)}\convd{\;\rho^{(n)}} N(0,J)\sim\left(X,\phi\right)$, 
we say a sequence $R^{(n)}$ of observables, each being defined on $\H^{(n)}$, is \textit{infinitesimal relative to the convergence} $X^{(n)}\convd{\;\rho^{(n)}} N(0,J)$ if it satisfies 
\begin{equation}
\lim_{n\rightarrow\infty}\Tr\rho^{(n)}\left(\prod_{t=1}^{r} e^{\sqrt{-1}\left(\xi_{t}^{i}X_{i}^{(n)}+\eta_{t}R^{(n)}\right)}\right)=\phi\left(\prod_{t=1}^{r} e^{\sqrt{-1}\xi_{t}^{i}X_{i}}\right)\label{eq:infinitesimal}
\end{equation}
for any finite subset of $\left\{ \xi_{t}\right\} _{t=1}^{r}$ of $\C^{d}$ and any finite subset $\left\{ \eta_{t}\right\} _{t=1}^{r}$ of $\C$.
An infinitesimal object $R^{(n)}$ relative to $X^{(n)}\convd{\;\rho^{(n)}} N(0,J)$ is denoted as $o(X^{(n)},\rho^{(n)})$.



\end{document}